\numberwithin{equation}{section}
\newtheorem{theorem}{Theorem}[section]
\newtheorem{corollary}[theorem]{Corollary}
\newtheorem{definition}[theorem]{Definition}
\newtheorem{lemma}[theorem]{Lemma}
\newtheorem{proposition}[theorem]{Proposition}
\theoremstyle{nonumberplain}
\newtheorem{proof}{Proof}
\def\BState{\State\hskip-\ALG@thistlm}
\newcommand{\blind}{1}
\begin{document}

\def\spacingset#1{\renewcommand{\baselinestretch}%
{#1}\small\normalsize} \spacingset{1}


\if1\blind
{
  \title{\bf Moving Target Monte Carlo}
  \author{Haoyun Ying\hspace{.2cm}\\
    Department of Mathematics, ETH Zurich\\
    and \\
    Keheng Mao \\
    Institute of Geophysics, ETH Zurich\\
    and\\ 
    Klaus Mosegaard \\
    Niels Bohr Institute, University of Copenhagen}
  \date{}
  \maketitle
} \fi

\if0\blind
{
  \bigskip
  \bigskip
  \bigskip
  \begin{center}
    {\LARGE\bf Moving Target Monte Carlo}
\end{center}
  \medskip
} \fi

\begin{abstract}
The Markov Chain Monte Carlo (MCMC) methods are popular when considering sampling from a high-dimensional random variable $\mathbf{x}$ with possibly unnormalised probability density $p$ and observed data $\mathbf{d}$. However, MCMC requires evaluating the posterior distribution $p(\mathbf{x}|\mathbf{d})$ of the proposed candidate $\mathbf{x}$ at each iteration when constructing the acceptance rate. This is costly when such evaluations are intractable. In this paper, we introduce a new non-Markovian sampling algorithm called Moving Target Monte Carlo (MTMC). The acceptance rate at $n$-th iteration is constructed using an iteratively updated approximation of the posterior distribution $a_n(\mathbf{x})$ instead of $p(\mathbf{x}|\mathbf{d})$. The true value of the posterior $p(\mathbf{x}|\mathbf{d})$ is only calculated if the candidate $\mathbf{x}$ is accepted. The approximation $a_n$ utilises these evaluations and converges to $p$ as $n \rightarrow \infty$. A proof of convergence and estimation of convergence rate in different situations are given.

\end{abstract}

\noindent%
{\it Keywords:}  Inverse Problems, Markov Chain Monte Carlo, non-Markovian Sampling method.
\vfill

\newpage
\begin{center}

\textbf{Acknowledgement}

\end{center}

\bigskip

\textit{I would like to thank Prof. Dr. Van de Geer of Department of Mathematics at ETH Zurich for her careful readings of this manuscript and fruitful discussions during the project.}

\bigskip

\textit{I would also like to acknowledge Prof. Dr. Andereas Fichtner of Institute of Geophysics at ETH Zurich for joining on this project at the very early stage. I am gratefully indebted to his very valuable ideas.}

\bigskip

\textit{Finally, I must express my very profound gratitude to M. Ying and X. Fei for their unfailing support and continuous encouragement throughout the process of researching and writing this paper.}

\bigskip

\textit{This accomplishment would not have been possible without them.}

\newpage
\tableofcontents

\spacingset{1.45} 

\newpage

\section{Introduction}

We consider the problem of sampling from a high-dimensional random variable $\mathbf{x}$ with possibly unnormalised probability distribution (target distribution) $p$. The Markov Chain Monte Carlo (MCMC) methods such as Metropolis-Hastings (MH) \cite{met} \cite{met2} are popular approaches which sample from posterior distribution $p(\mathbf{x}|\mathbf{d})$ conditioning on the observed data $\mathbf{d}$ utilising the Bayesian inference framework and guarantee the convergence of the sampling distribution to $p$ using the Markov chain properties. In practice where the evaluation of posterior distribution $p(\mathbf{x}|\mathbf{d})$ is intractable, the MCMC method could be expensive since this is required at each iteration when deciding whether the candidate selected by the proposal distribution is accepted. 

Over the last few years, there are researches in physics-related fields working on raising the acceptance rate of the proposed candidates so that most of the evaluations are not wasted. One could approximate $p$ based on its local structure and generate a 'promoted' proposal distribution similar to $p$, this would result a higher acceptance rate (J. Christen and C. Fox 2005 \cite{fox2}). In this way, the candidates are 'pre-selected' before the evaluations thus no evaluations are wasted. 

Employing approximations of the posterior $p$ has proved itself useful in reducing computations involving $p$. In this paper, we take the idea of using approximation one step further introducing the Moving Target Monte Carlo (MTMC) method: At step $n$, the $n$-th approximation $a_n$ of $p$ is used to construct the acceptance rate deciding if the $n$-th candidate shall be accepted. The posterior is only evaluated if the candidate is accepted so no evaluations are wasted. Furthermore, this evaluation is used to update the approximation to $a_{n+1}$ for the decision at step $n+1$. 

The updating of the approximation is better than fixing one approximation in many ways. First of all, since the approximation $a_n$ converges to $p$ with every candidate accepted and the posterior of it calculated, $a_n$ performs better than any other approximation in the long run. Also, the limitation of fixed approximations lies in the necessity of having a lot of pre-knowledge about the target distribution in order to be able to choose a reasonable approximation whereas MTMC needs no prior information, which means it can even start with an uniform distribution as approximation and still manage to sample with the target distribution in the end. Therefore, the application of MTMC goes beyond the physics-related fields.

On the other hand, there are yet no researches considering updating the approximation probably because the resulting chain is not Markovian and one cannot use Markov chain properties to prove convergence. Indeed, the approximation utilises all historical information and thus depends on all the sample points accepted. In this paper, we prove the convergence of MTMC using some ideas from the recently developed adaptive MCMC which also has a non-Markovian transition kernel.

In Section 2, we introduce the Markov Chain Monte Carlo based on the Metropolis-Hastings algorithm since our algorithm has a similar structure to the Metropolis-Hastings. Subsection 3.1 describes the Moving Target Monte Carlo based on a simple example using the Nearest Neighbour distribution as the approximation method in Section 3.2. For the convergence proof in Section 4, we first review the coupling argument in Subsection 4.1 and give our main result--the proof of convergence--based on it in Subsection 4.2. The main result has two constraints. We proceed by showing that the MTMC chain meets these two constraints. Finally, the estimation of the convergence rate is given in Section 5. In Subsection 5.1, we reduce the case by using an independent proposal and bound the variation distance using the eigenvalue analysis. Subsection 5.2 discusses the general case and again uses the coupling argument. 

\section{Markov Chain Monte Carlo(MCMC)}

\bigskip\noindent

\bigskip\noindent
\subsection{Bayesian Inference}
\bigskip\noindent
For a data point $\mathbf{d}\in D$ and a parameter $\mathbf{x} \in \mathcal{H}$, where $D$ and $\mathcal{H}$ denote the \emph{data space} and \emph{model space} respectively \cite{Tarantola}, 
the Bayesian inference predicts \emph{posterior distribution} $p(\mathbf{x}|\mathbf{d})$ (distribution of the parameters conditional on observed data) based on \emph{likelihood function}
$p(\mathbf{d}|\mathbf{x})$ (distribution of the observed data conditional on its parameters) and \emph{initial prior distribution} $p_0(\mathbf{x})$ (distribution of the parameters before observing data):
\begin{equation}
\label{probdens}
p(\mathbf{x}|\mathbf{d}) = k^{-1} p(\mathbf{d}|\mathbf{x})p_0(\mathbf{x}).
\end{equation}
The \emph{marginal likelihood}, $k=\int_{\mathcal{H}} p(\mathbf{d}|\mathbf{x})p_0(\mathbf{x}) d \mathbf{x}$, normalises the equation so that (\ref{probdens}) is indeed a probability density function.
\bigskip\noindent
\subsection{Markov Chain Monte Carlo (MCMC) and Metropolis-
Hastings (MH)}
\bigskip\noindent
Given prior information $p_0(\mathbf{x})$ and likelihood function $p(\mathbf{d}|\mathbf{x})$, 
one obtains posterior distribution $p(\mathbf{x}|\mathbf{d})$ with Bayes' rule only if the problem has an analytical solution. 
However, this often requires obtaining intractable inverse expression of forward relations. 
For instance, with a mass distribution and the location of the observer, it is trivial to obtain the corresponding gravitational force by applying the forward relation using Newton’s law of universal gravitation. 
However, the other way around could be difficult: Given the gravitational forces observed from several locations, finding the right mass distribution by applying the inverse relation can be challenging. 
In such case, we may proceed with an approximated posterior  $p(\mathbf{x}|\mathbf{d})$ based on \emph{Markov chain Monte Carlo (MCMC) methods}~\cite{Klaus}. 

MCMC algorithms are Markov chains where the probability of a transition from $\mathbf{x}$ to $\mathbf{y}$ in $\mathcal{H}$ is conditionally independent of the information from previously visited points. This is described by the \emph{transition
probability distributions} $P(\mathbf{x},\mathbf{y})$.

Consider an MCMC with transition probabilities
$P(\mathbf{x},\mathbf{y})$ with the sample distribution at $N$-th step $p_N$. If $p_N$ converges to some distribution $p$, then $p$ is a \emph{stationary} (or \emph{equilibrium}) distribution for $P(\mathbf{x},\mathbf{y})$. If the chain equilibrates to $p$ independently of the initial distribution $p_{0}$, then $p$ is the \emph{unique} stationary distribution for $P(\mathbf{x},\mathbf{y})$.

It is thus natural to consider designing a Markov chain with posterior distribution $p(\mathbf{x}|\mathbf{d})$ as the stationary distribution to draw independent samples using Monte Carlo method. The name MCMC comes from this idea. 

The sampling distribution can be used to evaluate the problem instead of $p(\mathbf{x}|\mathbf{d})$.

To illustrate the idea behind this, we consider one of the most popular MCMC methods, the \emph{Metropolis-Hastings (MH) algorithm} \cite{met} \cite{met2}. 

\begin{algorithm}[H]
\caption{Metropolis-Hastings Algorithm \cite{Yildirim}}
\label{Alg1}
\begin{algorithmic}[1]
\State Initialise the algorithm with $\mathbf{x}^{0} $
\For {iteration $n = 0,1,\dots,N-2$}
    \State Propose: $\mathbf{x}^{n+1,*} \sim Q(\mathbf{x}^{n},\mathbf{x}^{n+1,*})$
    \State Evaluate $p(\mathbf{x}^{n+1,*}|\mathbf{d})$; 
    \State Acceptance Probability:
    
    $\alpha_1(\mathbf{x}^{n},\mathbf{x}^{n+1,*}) =
    \min \Big(1,\frac{p(\mathbf{x}^{n+1,*}|\mathbf{d})}{p(\mathbf{x}^{n}|\mathbf{d})}\frac{Q(\mathbf{x}^{n+1,*},\mathbf{x}^{n})}{Q(\mathbf{x}^{n},\mathbf{x}^{n+1,*})}\Big)$
    \If {$u<\alpha_1(\mathbf{x}^{n},\mathbf{x}^{n+1,*})$ with $u \sim$ Uniform$(0,1)$}
        \State Accept the proposal: $\mathbf{x}^{n+1} \gets \mathbf{x}^{n+1,*}$
    \Else
        \State Reject the proposal: $\mathbf{x}^{n+1} \gets \mathbf{x}^{n}$
    \EndIf
\EndFor
\Return The sequence of $N$ points $\mathbf{x}^{0}, \mathbf{x}^{1}, \dots, \mathbf{x}^{N-1}$
\end{algorithmic}
\end{algorithm}

MH generates samples of a given distribution iteratively based on the following inputs: 
\begin{enumerate}
    \item The data $\mathbf{d}\in D$;
    \item The \emph{sample size} $N$ defined as the size of the samples one wants in the end;
    \item The target distribution $p$ with which the value $p(\mathbf{x}|\mathbf{d})$ is computable given $\mathbf{x}$ and $\mathbf{d}$;
    \item The step-wise \emph{proposal distribution} $Q(\mathbf{x}^{n},\mathbf{x}^{n+1,*})$ defined as the probability with which the Markov chain proposes $\mathbf{x}^{n+1,*}$ from current position $\mathbf{x}^{n}$.
\end{enumerate}

The algorithm is initialised by choosing the first sample $\mathbf{x}_0$ arbitrarily. Suppose a sequence of $n$ points $\mathbf{x}^{0}, \mathbf{x}^{1}, \dots, \mathbf{x}^{n-1}$ are previously visited by the algorithm.

The main loop of the algorithm comprises three components:
\begin{enumerate}
    \item At each step $n \geq 0$, 
propose a sample $\mathbf{x}^{n+1,*}$ with the proposal distribution $Q(\mathbf{x}^{n},\mathbf{x}^{n+1,*})$ conditioning on the current sample $\mathbf{x}^{n}$.
    \item Compute the acceptance ratio $\alpha_1(\mathbf{x}^{n},\mathbf{x}^{n+1,*})$ with given invariant target distribution $p(\mathbf{x}|\mathbf{d})$ and proposal distribution $Q(\mathbf{x}^{n},\mathbf{x}^{n+1,*})$: 
\begin{equation}
\label{alpha1}
\alpha_1(\mathbf{x}^{n},\mathbf{x}^{n+1,*}) =
    \min \bigg(1,\frac{p(\mathbf{x}^{n+1,*}|\mathbf{d})}{p(\mathbf{x}^{n}|\mathbf{d})}\frac{Q(\mathbf{x}^{n+1,*},\mathbf{x}^{n})}{Q(\mathbf{x}^{n},\mathbf{x}^{n+1,*})}\bigg).
\end{equation}
    \item Generate a random number $u \in [0,1]$. 
    
    Accept the candidate if $u<\alpha_1$; 
    
    Reject if otherwise and the chain remains at $\mathbf{x}^{(n)}$.
\end{enumerate}
Finally, the algorithm returns a sequence of $N$ samples points $\mathbf{x}^{0}, \mathbf{x}^{1}, \dots, \mathbf{x}^{N-1}$. The sampling distribution converges to $p(\mathbf{x}|\mathbf{d})$.

$Remark.$ In particular, one can use \emph{symmetric proposal distribution} such as random walks: 
\begin{equation}\nonumber
    Q(\mathbf{x},\mathbf{y})=Q(\mathbf{y},\mathbf{x}).
\end{equation}
Then the \emph{acceptance ratio} $\alpha_1$ is simplified to 
\begin{equation*}
    \alpha_1(\mathbf{x}^{n},\mathbf{x}^{n+1,*}) =
    \min \bigg(1,\frac{p(\mathbf{x}^{n+1,*}|\mathbf{d})}{p(\mathbf{x}^{n}|\mathbf{d})}\bigg)
\end{equation*}

\begin{figure}[H]
\includegraphics[width=16cm,height=20cm,keepaspectratio]{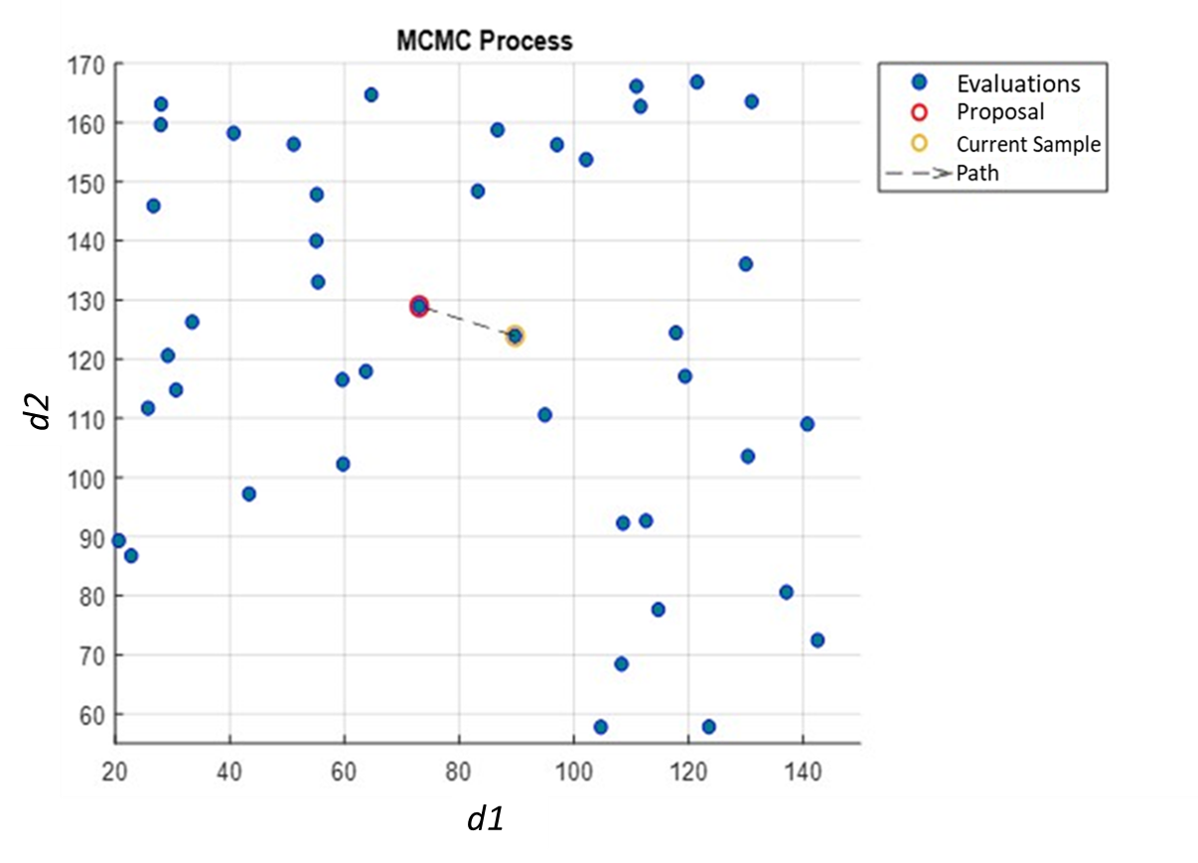}
\caption{Visualization of one iteration of Algorithm~\ref{Alg1} on 2-dimensional Euclidean space. Let $\mathbf{x}^{n}=(d^{n}_1,d^{n}_2)$ denote the 2-dimensional parameter. 
The blue dots indicate all evaluated points. The red and the yellow circles indicate the proposed sample $\mathbf{x}^{n+1,*}$ the current sample $\mathbf{x}^{n}$ respectively. 
The path indicates a potential transition. If the acceptance ratio $\alpha_1$ is bigger than the random generated number $u$, the proposed sample point is accepted. 
On the other hand, if the proposed sample is rejected, the point itself and its evaluation are discarded out of memory. It is to notice that all other former evaluations are irrelevant to the decision at the $n$-th step.}
\label{fig1}
\end{figure}

\bigskip\noindent

\section{Moving Target Monte Carlo (MTMC)}

\bigskip\noindent

Although Metropolis-Hastings chain samples fast and precise, it has two drawbacks in problems where evaluating posterior is expensive:

\begin{enumerate}
    \item At each step, MH evaluates the posterior of the candidate using both proposal distribution and the given target distribution before making decisions. 
    Thus if the algorithm rejects the proposed sample, the corresponding evaluation is discarded. The time and the cost for this evaluation are wasted;
    \item Only two evaluations are involved when calculating the acceptance probability, whereas the information about other evaluated points which also contains knowledge about the posterior distribution is not utilized in the decision-making process.
\end{enumerate}

In the previous section, we emphasised that the target distribution $p(\mathbf{x}|\mathbf{d})$ depends on data set $D$. This indicates that MH has to calculate $p(\mathbf{x}^{n+1,*}|\mathbf{d})$ using the (possibly a giant amount of) data in every iteration. 
Such calculations can be extremely intractable in many real-world cases as described in the Introduction.

We introduce a new algorithm alleviating the computation by evaluating proposals with an approximation based on historical information of the visited points. This minimises the amount of evaluations using the true distribution $p(\mathbf{x}|\mathbf{d})$ needed.
\bigskip\noindent
\subsection{Description of the method}
\bigskip\noindent
In the following, we describe a non-Markovian algorithm sampling a given target distribution $p(\mathbf{x}|\mathbf{d})$ using information from all earlier sample points in the process. The algorithm proceeds using a rough approximation $a_n(\mathbf{x})$ of $p(\mathbf{x}|\mathbf{d})$ at $n$-th iteration and successively improves the approximation so that it converges to $p$. The true distribution $p(\mathbf{x}|\mathbf{d})$ is only calculated if the sample point $\mathbf{x}$ is accepted based on the approximation distribution $a_n$. The evaluation of the posterior of $\mathbf{x}$ afterwards is also used to update the approximation $a_n$ to a better approximation $a_{n+1}$ of $p(\mathbf{x}|\mathbf{d})$. 

\begin{algorithm}[H]
\caption{Moving Target MC Algorithm}\label{Alg2}
\begin{algorithmic}[1]
\State Initialise the algorithm with $\mathbf{x}^{0}$
\For {iteration $n = 0,1,\dots,N-2$}
    \State Propose: $\mathbf{x}^{n+1,*} \sim Q(\mathbf{x}^{n},\mathbf{x}^{n+1,*})$
    \State Acceptance Probability:
    $\alpha_2(\mathbf{x}^{n},\mathbf{x}^{n+1,*}) = \min \Big(1,\frac{a_{n}(\mathbf{x}^{n+1,*})}{a_{n}(\mathbf{x}^{n})}\frac{Q(\mathbf{x}^{n+1,*},\mathbf{x}^{n})}{Q(\mathbf{x}^{n},\mathbf{x}^{n+1,*})}\Big)$
    \If {$u<\alpha_2(\mathbf{x}^{n},\mathbf{x}^{n+1,*}), u \sim$ Uniform$(u;0,1)$}
        \State Accept the proposal: $\mathbf{x}^{n+1} \gets \mathbf{x}^{n+1,*}$; Evaluate $p(\mathbf{x}^{n+1}|\mathbf{d})$;
        \State Update $a_{n}(\mathbf{\cdot})$ to $a_{n+1}(\mathbf{\cdot})$ using $a_{n+1}(\mathbf{x}^{n+1}) = p(\mathbf{x}^{n+1}|\mathbf{d})$
    \Else
        \State Reject the proposal: $\mathbf{x}^{n+1} \gets \mathbf{x}^{n}$; $a_{n+1}(\mathbf{\cdot}) \gets a_{n}(\mathbf{\cdot})$
    \EndIf
\EndFor
\Return The sequence of $N$ points $\mathbf{x}^{0}, \mathbf{x}^{1}, \dots, \mathbf{x}^{N-1}$
\end{algorithmic}
\end{algorithm}

MTMC generates samples of a given distribution iteratively based on the following inputs:
\begin{enumerate}
    \item The data $\mathbf{d} \in D$;
    \item The sample size $N$;
    \item The target distribution $p$ with which the value $p(\mathbf{x}|\mathbf{d})$ is computable given $\mathbf{x}$ and $\mathbf{d}$;
    \item The approximation method $(a_n)_{n \in A}$ where $A$ is the approximation index;
    \item The step-wise proposal distribution $Q(\mathbf{x}^{n},\mathbf{x}^{n+1,*})$.
\end{enumerate}

The algorithm is initialised by choosing the first sample $\mathbf{x}_0$ arbitrarily. Suppose a sequence of $n$ points $\mathbf{x}^{0}, \mathbf{x}^{1}, \dots, \mathbf{x}^{n-1}$ are previously visited by the algorithm. 

Then the corresponding sequence of distributions $\{a_n(\mathbf{x})\}_{n \in \mathbb{N}}$ satisfies:
\begin{equation}
\label{req1}
a_n(\mathbf{x}^k) = p(\mathbf{x}^{k}|\mathbf{d}) \text{ for } k = 0, \dots, n-1  \ ,
\end{equation}
and we assume for each point $\mathbf{x} \in \mathcal{H}$
\begin{equation}
\label{req2}
\lim_{n \rightarrow \infty} \sup_{\mathbf{x}} \lvert a_{n} (\mathbf{x}) - p(\mathbf{x}|\mathbf{d}) \rvert \rightarrow 0
\end{equation}
and
\begin{equation}
\label{req3}
    \sup_x \lvert a_{n+1} (\mathbf{x}) - a_{n}(\mathbf{x}) \rvert \rightarrow 0 \text{ in probability},\text{for} \ n \rightarrow \infty.
\end{equation}
We assume that any point in $\mathcal{H}$ has non-zero probability of being visited after sufficient steps.

The main loop of the algorithm comprises three components:
\begin{enumerate}
    \item At each step $n \geq 0$, propose a sample $\mathbf{x}^{n+1,*}$ with the proposal distribution $Q(\mathbf{x}^{n},\mathbf{x}^{n+1,*})$ conditioning on the current sample $\mathbf{x}^{n}$.
    \item Compute the acceptance ratio $\alpha_2(\mathbf{x}^{n},\mathbf{x}^{n+1,*})$ with the current approximation distribution $a_n(\mathbf{x})$ and proposal distribution $Q(\mathbf{x}^{n},\mathbf{x}^{n+1,*})$: 
    \begin{equation}
    \label{alpha2}
    \alpha_2(\mathbf{x}^{n},\mathbf{x}^{n+1,*}) = \min \bigg(1,\frac{a_{n}(\mathbf{x}^{n+1,*})}{a_{n}(\mathbf{x}^{n})}\frac{Q(\mathbf{x}^{n+1,*},\mathbf{x}^{n})}{Q(\mathbf{x}^{n},\mathbf{x}^{n+1,*})}\bigg).
    \end{equation}
    \item Generate a random number $u \in [0,1]$. 
    
    Accept the candidate if $u<\alpha_2$: Calculate and save the value of $p(\mathbf{x}^{n+1}|\mathbf{d})$. Update the approximation to $a_{n+1}$ using $a_{n+1}(\mathbf{x}^{n+1}) = p(\mathbf{x}^{n+1}|\mathbf{d})$;
    
    Reject the candidate if $u<\alpha_2$: The chain remains at $\mathbf{x}^{(n)}$. No evaluation is made and $a_{n+1}=a_n$.
\end{enumerate}

Finally, the algorithm returns a sequence of $N$ sample points $\mathbf{x}^{0}, \mathbf{x}^{1}, \dots, \mathbf{x}^{N-1}$. The distribution of the sample converges to $p$ when $N \rightarrow \infty$.

$Remark.$ Similar to Algorithm~\ref{Alg1}, symmetric proposal distributions can be applied to $\alpha_2$ in order to simplify the argument inside $\min$.

Note that the acceptance probability (\ref{alpha2}) only depends on the current approximation $a_n$ of $p$ and the evaluation of $p(\mathbf{x}|\mathbf{d})$ is required only in accepted points. In this lies the potential advantage of the algorithm: When evaluation of $a_n(\mathbf{x})$ is much less computationally intensive than the computation of $p(\mathbf{x}|\mathbf{d})$, there is a significant gain in the computational workload.

In principle, $a_{n}$ can be any approximation with (\ref{req1}), (\ref{req2}) and (\ref{req3}) which guarantee the convergence to the true value when the sample size is large enough (see proof in the later sections). However, the rate of convergence depends on the choice of how one approximates $p$ (see proof in the later sections). 

\bigskip\noindent
\subsection{Example: Nearest Neighbour Approximation} 
\bigskip\noindent
To illustrate MTMC, let us consider a simple approximation method, the \emph{nearest neighbour constant interpolation}.

Let $\mathbf{x}^{n}_{I}$, $\mathbf{x}^{n+1,*}_{I}$ be the nearest point of $\mathbf{x}^{n}$, $\mathbf{x}^{n+1,*}$ respectively among all evaluated points $\{\mathbf{x}_i\}_{i \in I}$, where $I=\{0,\dots,t\}$ with $t$ the total number of evaluations and also the number of accepted points (It is possible that $t \neq n$ since rejection gives repeated sample points). Furthermore, since the nearest neighbour $\mathbf{x}^{n}_{I}$ to $\mathbf{x}^{n}$ among $\{\mathbf{x}_{i}\}$ is $\mathbf{x}^{n}$ itself, we rewrite (\ref{alpha2}) as
\begin{equation}
\label{eq4}
\alpha_2(\mathbf{x}^{n}, \mathbf{x}^{n+1,*}) = \min \bigg(1,\frac{p(\mathbf{x}^{n+1,*}_{I}|\mathbf{d})}{p(\mathbf{x}^{n}|\mathbf{d})}\frac{Q(\mathbf{x}^{n+1,*},\mathbf{x}^{n})}{Q(\mathbf{x}^{n},\mathbf{x}^{n+1,*})}\bigg).
\end{equation}
since the approximation $a_{n}(\mathbf{x}^{n+1,*})$ at $\mathbf{x}^{n+1,*}$ equals the value of the target distribution at its nearest neighbour $p(\mathbf{x}^{n+1,*}_{I}|\mathbf{d})$ whereas $a_{n}(\mathbf{x}^{n})$ adopts its true value $p(\mathbf{x}^{n}|\mathbf{d})$. 

\begin{definition}
Let $\{\mathbf{x}_{i}\}_{i = 0,\dots,t}$ be a set of points in d-dimensional metric space, $2\leq t \leq \infty$, $\mathbf{x}_{i} \neq \mathbf{x}_{j}$ for $i\neq j$. The Voronoi cell of $\mathbf{x}_{i}$ is~\cite{Sambridge}
\begin{equation*}
A_i=\{\mathbf{x}\big| |\mathbf{x}-\mathbf{x}_{i}|\leq |\mathbf{x}-\mathbf{x}_{j}| \text{ , for } j\neq i, \hspace{1mm} j=0,...,t\}.
\end{equation*}
\end{definition}

Figure~\ref{fig2} visualises the above described process with Voronoi cells. The posterior of $\mathbf{x}^{n+1,*}$ is not calculated to evaluate the proposed sample.
Instead, $\mathbf{x}^{n+1,*}$ adopts the value of its nearest evaluated neighbour (red dot). The current sample uses its true value (yellow dot). The simplicity of this interpolation method comes from the fact that only one extra sample $\mathbf{x}^{n+1,*}_{I}$ is involved in the interpolation process. 
Other interpolation methods such as inverse distance interpolation or k-nearest neighbour interpolation may quest multiple points in one step.
The value of all samples inside one Voronoi cell is set to be the same. One could consider Voronoi cell as the nearest neighbour region of a given point whose value is already calculated.

Let $V_{i}(\mathbf{x})$ be the basis function of Voronoi cells
\begin{equation}\nonumber
V_{i}(\mathbf{x}) = \begin{cases}
1, & \text{if $\mathbf{x} \in A_i$}.\\
0, & \text{else}.
\end{cases}
\end{equation}
The approximation using Voronoi cell at time $n$ is
\begin{equation}\nonumber
a_{n}(\mathbf{x}) = \hat{k}^{-1}_n\sum^t_{i=0}p(\mathbf{x}_{i}|\mathbf{d})V_{i}(\mathbf{x}).
\end{equation}
with $\hat{k}_n =\int_{\mathcal{H}} a_{n}(\mathbf{x}) d \mathbf{x}$ the normalisation constant. 

\begin{figure}[H]
\includegraphics[width=16cm,height=20cm,keepaspectratio]{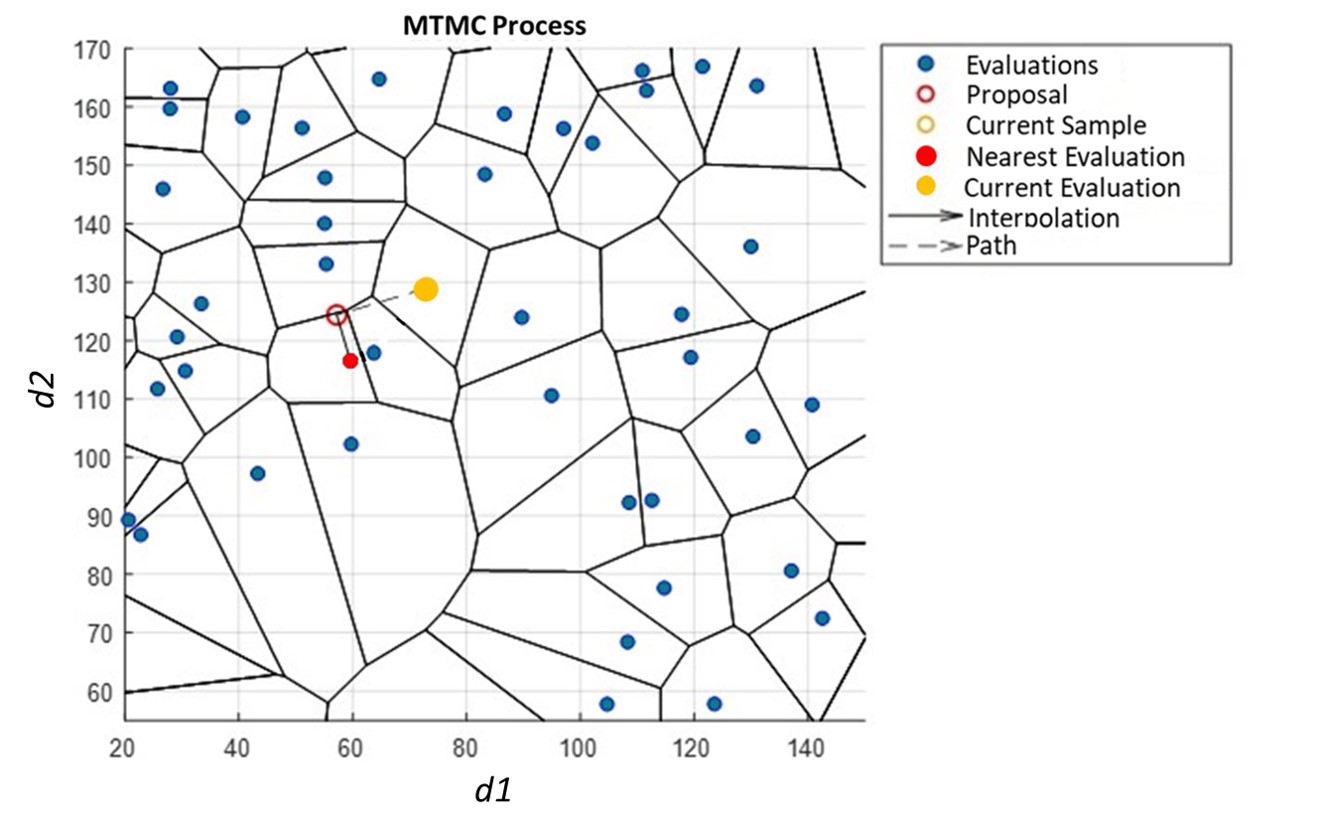}
\caption{Visualisation of one iteration of Algorithm~\ref{Alg2}. 
The setup resembles Figure~\ref{fig1}. The solid black lines partition the space into Voronoi cells centered at evaluated points (blue dots) with true values. 
The solid arrow represents the interpolation from the nearest neighbor (red dot) to the proposed sample (red circle). 
While the current sample (yellow circle) overlaps with the current evaluation (yellow dot).}
\label{fig2}
\end{figure}

Instead of sampling with the true distribution $p$ and have stationary distribution equal to $p$ at each step, MTMC samples with the approximated distributions converging to $p$. At $n$-th step, the stationary distribution is $a_{n}$ (see proof in the later sections). The stationary distribution $a_n$ evolves with time $n$ and converges to $p$ along with the sampling distribution, thus 'Moving target Monte Carlo'. 

\bigskip
\section{Proof of Convergence}
\bigskip
\subsection{The Coupling Method}
In this section, we use the classic coupling technique to prove that the MTMC chain is ergodic. The coupling technique is widely used in the convergence proof of Markov chains. However, the Markov property is not a necessary condition for the coupling argument.  

First, we shall review how to construct a coupling argument \cite{coupling}.

The intuition of coupling is the following. We have two chains $(X_k)$ and $(Y_k)$ starting in the distribution $p_0$ and $p$ accordingly and proceeds both with the same transition probability. After some \emph{stopping time} $T$, the two chains meet and the sampling outputs become equal. If we set $p$ to be the stationary distribution and $p_0$ to be the start distribution, using the \emph{coupling inequality} we could bound the \emph{variation distance} to the stationary distribution for our chain. 

First, let us define the variation distance between two probability measures.

\begin{definition}
\label{vardis}
Given probability measures $m_1$ and $m_2$ on the same measurable space $(\mathcal{H},\sigma(\mathcal{H}))$, the total variation distance between them is defined as
\begin{equation*}
    \lVert m_1-m_2 \rVert = \sup_B |m_1(B)-m_2(B)|
\end{equation*}
where the supremum is taken over all measurable subsets $B$ in $\sigma(\mathcal{H})$.
\end{definition}

$Notation.$ The probability of some event $E$ is denoted as $Prob(E)$. Similarly, the joint probability distribution of the events $E_1,\dots,E_n$ is denoted as $Prob(E_1,\dots,E_n)$. The conditional distribution of $E_1,\dots,E_n$ given $F_1,\dots,F_n$ is $Prob(E_1,\dots,E_n|F_1,\dots,F_n)$. We use $\mathcal{L}(X_k)(\cdot)$ to denote the probability distribution of the $(k+1)$-th iteration given the $k$-th iteration $X_k$. For transition probability we use $\sim$. For instance, $X_{k+1} \sim P(X_k,\cdot)$ means that the transition probability from $X_k$ to $X_{k+1}$ follows the transition kernel $P$. 

Given a chain on state space $\mathcal{H}$ with stationary distribution $p$, initial distribution $p_0$ and transition probability $P$, suppose we can find a 'joint' chain $(X_k,Y_k)$ on $\mathcal{H} \times \mathcal{H}$:
\begin{enumerate}
    \item $X_{k+1} \sim P(X_k,\cdot)$;
    \item $Y_{k+1} \sim P(Y_k,\cdot)$;
    \item $Prob(X_0=\cdot)=p_0(\cdot)$;
    \item $Prob(Y_0=\cdot)=p(\cdot)$.
\end{enumerate}

As the first two conditions state that both of these chains follow the same transition probability, the fourth condition implies $Prob(Y_k=\cdot)=p(\cdot)$ for all $k$. If there is a time $T$ such that $X_k=Y_k$ for all $k>T$, we call $(X_k,Y_k)$ a coupling. Now, we introduce the coupling inequality.

\begin{theorem}(Coupling Inequality)
\label{thm2.2}
The variation distance defined as in Definition \ref{vardis} between $\mathcal{L}(X_k)$ and $p$ is bounded above by the probability that $T>k$:
\begin{equation*}
    \rVert \mathcal{L}(X_k) - p(\cdot) \rVert \leq Prob(X_k \neq Y_k) \leq Prob(T>k) 
\end{equation*}
\end{theorem}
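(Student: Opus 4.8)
The plan is to prove the coupling inequality in two parts: first the elementary fact that $\lVert \mathcal{L}(X_k) - p(\cdot)\rVert \le Prob(X_k \neq Y_k)$, and then the bound $Prob(X_k \neq Y_k) \le Prob(T > k)$, which is essentially immediate from the definition of the coupling time $T$.

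For the second inequality I would argue directly: by the defining property of the coupling, $X_j = Y_j$ for all $j > T$. Hence the event $\{X_k \neq Y_k\}$ can only occur when $T \ge k$, and in fact — since equality holds strictly past $T$ — when $T > k$ is the relevant bound as stated (at $j = T$ the chains may or may not have met, but for $j > T$ they agree, so $\{X_k \neq Y_k\} \subseteq \{T \ge k\}$; matching the paper's convention I would phrase this so it yields $\{X_k \neq Y_k\} \subseteq \{T > k\}$ using that the chains have coupled by time $k$ whenever $k > T$). Taking probabilities gives the claim.

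For the first inequality, fix an arbitrary measurable set $B \in \sigma(\mathcal{H})$. Since $Prob(Y_k = \cdot) = p(\cdot)$ for all $k$ (this follows from conditions 2 and 4 of the coupling construction, as noted just before the theorem), I can write
\begin{align*}
\mathcal{L}(X_k)(B) - p(B) &= Prob(X_k \in B) - Prob(Y_k \in B) \\
&= Prob(X_k \in B,\, X_k = Y_k) + Prob(X_k \in B,\, X_k \neq Y_k) \\
&\quad - Prob(Y_k \in B,\, X_k = Y_k) - Prob(Y_k \in B,\, X_k \neq Y_k).
\end{align*}
The two terms involving $\{X_k = Y_k\}$ cancel, leaving
\begin{equation*}
\mathcal{L}(X_k)(B) - p(B) = Prob(X_k \in B,\, X_k \neq Y_k) - Prob(Y_k \in B,\, X_k \neq Y_k) \le Prob(X_k \neq Y_k).
\end{equation*}
By symmetry (swapping the roles of $X_k$ and $Y_k$) the same bound holds for $p(B) - \mathcal{L}(X_k)(B)$, so $|\mathcal{L}(X_k)(B) - p(B)| \le Prob(X_k \neq Y_k)$. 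Taking the supremum over all $B$ gives $\lVert \mathcal{L}(X_k) - p(\cdot)\rVert \le Prob(X_k \neq Y_k)$, and chaining with the second inequality completes the proof.

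I do not expect any serious obstacle here — the argument is the standard coupling inequality and does not use the Markov property, which matches the paper's remark that coupling applies to non-Markovian chains. The only point requiring mild care is bookkeeping around the definition of the stopping time $T$ and the strict-versus-non-strict inequality $T > k$ versus $T \ge k$; I would make sure the statement's convention ($X_k = Y_k$ for all $k > T$) is used consistently so that the event inclusion lands exactly as $\{X_k \neq Y_k\} \subseteq \{T > k\}$.
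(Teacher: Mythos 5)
Your proposal is correct and follows essentially the same route as the paper's proof: decompose $Prob(X_k\in B)-Prob(Y_k\in B)$ over the events $\{X_k=Y_k\}$ and $\{X_k\neq Y_k\}$, cancel the matching terms, bound by $Prob(X_k\neq Y_k)$, and then by $Prob(T>k)$ before taking the supremum over $B$. Your extra care with the symmetry step and the $T>k$ versus $T\ge k$ bookkeeping is a minor refinement of the same argument, not a different approach.
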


\begin{proof}
For any subset $B \in \sigma(\mathcal{H})$:
\begin{equation*}
\begin{split}
    |\mathcal{L}(X_k)(B)-p(B)| & = |Prob(X_k \in B)-Prob(Y_k \in B)|\\
    &=|Prob(X_k \in B, X_k = Y_k) + Prob(X_k \in B, X_k \neq Y_k)\\
    &-Prob(Y_k \in B, X_k = Y_k) - Prob(Y_k \in B, X_k \neq Y_k)|\\
    &=|Prob(X_k \in B, X_k \neq Y_k) - Prob(Y_k \in B, X_k \neq Y_k)|\\
    &\leq Prob(X_k \neq Y_k)\\
    &\leq Prob(T>k)
\end{split}
\end{equation*}
Thus,
\begin{equation*}
    \rVert \mathcal{L}(X_k) - p(\cdot) \rVert = \sup_B |\mathcal{L}(X_k) - p(\cdot)|
\end{equation*}
gives the desired inequality.
\end{proof}

We now give a general result that is used later constructing coupling chains in the proof of convergence.
The proof follows Roberts' and Rosenthal's \cite{couplingconstruct}. 

\begin{theorem}(Roberts and Rosenthal)
\label{thm2.3}
Given probability measures $m_1$ and $m_2$ on $(\mathcal{H},\sigma(\mathcal{H}))$, their total variation distance as in Definition \ref{vardis}, one can construct two chains $X_n$ and $Y_n$ with $Prob(X_n \neq Y_n) = \lVert m_1 -m_2 \rVert$.
\end{theorem}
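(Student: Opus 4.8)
The plan is to give an explicit construction of the coupling that achieves equality in the coupling inequality — the so-called \emph{maximal coupling}. The starting point is the standard identity relating total variation distance to the overlap of two densities: writing $m_1$ and $m_2$ in terms of densities $f_1, f_2$ with respect to a common dominating measure $\mu$ (e.g. $\mu = m_1 + m_2$), one has $\lVert m_1 - m_2 \rVert = 1 - \int_{\mathcal{H}} \min(f_1, f_2)\, d\mu$. Denote $\beta = \int \min(f_1,f_2)\, d\mu$, so the ``shared mass'' is $\beta$ and the ``disagreement mass'' is $1 - \beta = \lVert m_1 - m_2 \rVert$. I would first verify this identity (it is routine: the set where $f_1 \geq f_2$ is where the sup in Definition~\ref{vardis} is attained, up to null sets).

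Next I would build the joint law of $(X_n, Y_n)$ — here it suffices to construct a single pair of random variables $(X, Y)$, since the statement only concerns the marginal disagreement probability. The construction is the familiar two-step recipe: flip a biased coin with success probability $\beta$; on success, draw a single point $Z$ from the normalised overlap density $\min(f_1,f_2)/\beta$ and set $X = Y = Z$; on failure, draw $X$ from the normalised residual density $(f_1 - \min(f_1,f_2))/(1-\beta)$ and independently $Y$ from $(f_2 - \min(f_1,f_2))/(1-\beta)$. One then checks that the marginal of $X$ is $m_1$ and the marginal of $Y$ is $m_2$: integrating the joint construction, the density of $X$ is $\beta \cdot \frac{\min(f_1,f_2)}{\beta} + (1-\beta)\cdot\frac{f_1 - \min(f_1,f_2)}{1-\beta} = f_1$, and symmetrically for $Y$. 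Finally, $Prob(X \neq Y)$ is at most the probability of the ``failure'' branch, which is $1 - \beta = \lVert m_1 - m_2 \rVert$; and in fact on the failure branch the residual densities have disjoint supports (where $f_1 > f_2$ versus $f_2 > f_1$), so $X \neq Y$ almost surely there, giving $Prob(X \neq Y) = \lVert m_1 - m_2 \rVert$ exactly. To produce the chains $X_n, Y_n$ as in the statement one simply repeats this construction independently (or as needed) at each step, or — in the intended application — uses it once to couple the two marginals at the stopping time and lets the chains evolve together thereafter.

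The only mild subtlety, and the step I would be most careful about, is the measure-theoretic bookkeeping when $m_1$ and $m_2$ are not mutually absolutely continuous or live on a space without a natural reference measure: one must pick the dominating measure $\mu = m_1 + m_2$, work with Radon–Nikodym derivatives $f_i = dm_i/d\mu$, and confirm that ``$\min(f_1, f_2)$'' and the residuals are genuinely densities of sub-probability measures, that $\beta \in [0,1]$, and that the degenerate cases $\beta = 0$ (mutually singular) and $\beta = 1$ ($m_1 = m_2$) are handled correctly by the construction. None of this is deep, but it is where a sloppy argument would break. Everything else is a direct verification that the constructed joint law has the advertised marginals and disagreement probability.
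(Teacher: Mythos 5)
Your construction is the same maximal-coupling argument the paper uses: a common dominating measure, the overlap density $\min(f_1,f_2)$, a coin flip with the overlap mass, independent draws from the normalised residuals otherwise, and the identity $\lVert m_1-m_2\rVert = 1-\int \min(f_1,f_2)\,d\mu$ (which the paper verifies via $H=\max(f,g)$ and $H+h=f+g$ rather than via the set $\{f_1\ge f_2\}$, a cosmetic difference). Your proposal is correct and even slightly more careful than the paper on the disjoint-support point ensuring $X\neq Y$ on the failure branch.
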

\begin{proof}
Let $m$ be any $\sigma$-finite measure on $(\mathcal{H},\sigma(\mathcal{H}))$ with $m_1 \ll m$ and $m_2 \ll m$. 
Let $f = \frac{dm_1}{dm}$ and $g=\frac{dm_2}{dm}$.
Let $h = \min (f,g)$.
We construct $X_n$ and $Y_n$ as follows:
\begin{enumerate}
    \item With probability $c_1=\int_{\mathcal{H}} h dm$: $X_{n+1}=Y_{n+1}  \sim \frac{h}{c_1}$;
    \item With probability $1-c_1$: the two chains proceeds independently as follows:
    
    Let $c_2 = \int_{\mathcal{H}} f-h dm$ and $c_3 = \int_{\mathcal{H}} g-h dm$.
    \begin{enumerate}
        \item $X_{n+1} \sim \frac{f-h}{c_2}$;
        \item $Y_{n+1} \sim \frac{g-h}{c_3}$.
    \end{enumerate}
    Notice that $X_{n+1} \neq Y_{n+1}$.
\end{enumerate}
Thus $X \sim m_1$, $Y \sim m_2$ and
\begin{equation*}
    Prob(X=Y) = c_1 .
\end{equation*}
\textbf{Claim}
\begin{equation*}
    c_1 = 1-\lVert m_1-m_2 \rVert.
\end{equation*}
Let $H = \max (f,g)$.
We rewrite the total variation bound as:
\begin{equation*}
    \lVert m_1-m_2 \rVert = \frac{1}{2} (\int_{f>g} (f-g) dm + \int_{g>f} (g-f) dm) =\frac{1}{2} \int_{\mathcal{H}} (H-h) dm.
\end{equation*}
Since $H+h = f+g$, we have
\begin{equation*}
    \int_{\mathcal{H}} (H+h) dm =2.
\end{equation*}
Hence,
\begin{equation*}
\begin{split}
    \frac{1}{2} \int_{\mathcal{H}} (H-h) dm &= 1- \frac{1}{2}(2- \int_{\mathcal{H}} (H-h) dm)\\
    &= 1- \frac{1}{2}(\int_{\mathcal{H}} (H+h) dm - \int_{\mathcal{H}} (H-h) dm) \\
    &= 1-\int_{\mathcal{H}} h dm \\
    &= 1- c_1
\end{split}
\end{equation*}
The result follows.
\end{proof}

\bigskip
\subsection{The Proof}
\bigskip

\begin{definition}
\label{def2.5}
Suppose the MTMC chain is equipped with the approximation distribution index $A$ and the chain starts with approximation $a_0$ and initial point $\mathbf{x}$. The total variation bound between the distribution of MTMC chain on a state space $\mathcal{H}$ at step $n$ and the target distribution $p(\cdot)$ is given as:
\begin{equation*}
    TV(\mathbf{x}, a_0, n) = \sup_{B\in \sigma(\mathcal{H})} \lvert Prob(X_n \in B| X_0 = \mathbf{x}, A_0 = a_0)-p(B) \rvert, 
\end{equation*}
\end{definition}

\begin{definition}
\label{def2.6}
The MTMC chain is called ergodic if the total variation bound between its distribution at step $n$ and the target distribution $p$ converges to $0$ for all initial values $a_0$ and $\mathbf{x}$, i.e.:
\begin{equation*}
    \lim_{n \rightarrow \infty} TV(\mathbf{x}, a_0, n) = 0
\end{equation*}
\end{definition}

Now we prove the main result of convergence: the MTMC chain is ergodic in the sense of Definition \ref{def2.5}. 

First we give the intuition of the theorem and its proof.

For all $m \in A$, the transition kernel $P_m$ with acceptance ratio (\ref{alpha2}) designed using the approximation distribution $a_m$ has $a_m$ as the stationary distribution (proof later in this section). We require that:
\begin{enumerate}
    \item The variation distance between the transition probability $P_m(\mathbf{x},\cdot)$ and the target distribution $p(\cdot)$ converges to $\delta(m)$ uniformly for all $\mathbf{x} \in \mathcal{H}$ and $m \in A$ where $\delta(m) \rightarrow 0$ for $m \rightarrow \infty$ is the variation distance between $a_m$ and $p$;
    \item The variation distance between each successive transition probabilities, i.e. $P_m(\mathbf{x},\cdot)$ and $P_{m+1}(\mathbf{x},\cdot)$ converges to $0$ in probability for all $\mathbf{x}$.
\end{enumerate}

The first condition merely states that the distribution of a chain with transition kernel $P_m$ converges to $a_m$ and $a_m$ converges to $p$. The second condition gives that the variation distance between each successive transition kernels vanishes almost everywhere over time.

For the proof, we first define an event E, on which the construction of the following two chains is possible:
The first chain is the original chain starting from the $(K-N)$-th iteration and the second chain is a chain with transition kernel $P_{K-N}$ where $K$ and $N$ are chosen carefully so that the probability of these two chains being different at $K$-th iteration is small enough. 

Then, we prove that variation distance on $E$ between the second and the third chain (the chain with transition probability $p$) is small enough.

In total, the probability of the first chain and the third chain not being equal at $K$-th iteration is bounded by the probability of the first and second chain not being equal on $E$ plus the second and third chain not being equal on $E$ plus the probability of $E$ not happening. 

Then we use the general result Theorem \ref{thm2.2} to give the desired ergodicity. 

It is to notice that the convergence does not depend on specific choices of $a_m$ but the way how one successively chooses $a_m$. One can consider the index set $A$ of approximation as a hidden Markov chain. Thus the convergence is universal in the sense that one can choose the approximation method arbitrarily as long as the two constraints above are met.

\begin{theorem}
\label{thm2.6}
Consider MTMC algorithm on a state space $\mathcal{H}$ with approximation index $A$. For each transition kernel $P_m$, the stationary distribution is $a_m$ for all $m \in A$. Assume that for all $\mathbf{x}$:
\begin{itemize}
    \item For all $\epsilon > 0$, there exists $N=N(\epsilon) \in \mathbb{N}$ such that
    \begin{equation}
        \lVert P^N_m (\mathbf{x}, \cdot) - p(\cdot) \rVert \leq \epsilon + \delta(m)
    \end{equation}
    where $\delta(m) \rightarrow 0$ for $m \rightarrow \infty$ and 
    \begin{equation*}
        \delta(m) = \lVert a_m(\cdot)-p(\cdot) \rVert;
    \end{equation*}
    \item \begin{equation*}
        \lim_{m \rightarrow \infty} D_m \rightarrow 0
    \end{equation*}
    where
    \begin{equation*}
        D_m = \sup_{\mathbf{x} \in \mathcal{H}} \lVert P_{m+1}(\mathbf{x},\cdot)-P_{m}(\mathbf{x},\cdot) \rVert
    \end{equation*}
    is a $\mathcal{G}_{m+1} = \sigma (a_0,\dots,a_{m+1},\mathbf{x}^0, \dots, \mathbf{x}^{m+1})$ measurable random variable depending on the random values $m$ and $m+1$. 
\end{itemize}
Then the MTMC chain is ergodic.
\end{theorem}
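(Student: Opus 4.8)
Proof proposal.

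\smallskip

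The approach is to import the coupling proof of ergodicity for adaptive MCMC (Roberts and Rosenthal) into the MTMC framework, reading the hypothesis $D_m\to 0$ as a diminishing-adaptation condition and the hypothesis on $\lVert P^N_m(\mathbf{x},\cdot)-p\rVert$ as a containment-type condition. Fix $\epsilon>0$, an initial point $\mathbf{x}$ and an initial approximation $a_0$, and let $N=N(\epsilon)\in\mathbb{N}$ be the integer supplied by the first hypothesis. Write $(X_n)$ for the MTMC chain and $A_n$ for its (random) approximation index after step $n$; since the index advances only on an accepted proposal, $A_{n+1}-A_n\in\{0,1\}$, and under the standing assumption that every point of $\mathcal{H}$ is eventually visited the chain accepts infinitely often, so that $A_n\to\infty$ almost surely. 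Because $\epsilon$ was arbitrary, it suffices to prove $\limsup_{K\to\infty}TV(\mathbf{x},a_0,K)\le\epsilon$.

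For each large $K$ I would couple together three processes on the block $K-N\le n\le K$. The first is the genuine MTMC chain $(X_n)$. The second is a \emph{frozen} chain $(X'_n)$ with $X'_{K-N}=X_{K-N}$ that, for the next $N$ steps, moves with the single fixed kernel $P_{A_{K-N}}$ rather than with the varying kernels $P_{A_n}$; since $P_{A_{K-N}}$ has stationary law $a_{A_{K-N}}$, the first hypothesis gives, conditionally on $\mathcal{G}_{K-N}$, that its time-$K$ law $P^N_{A_{K-N}}(X_{K-N},\cdot)$ lies within $\epsilon+\delta(A_{K-N})$ of $p$ in total variation. The third is implicit: it is simply the target $p$, against which we compare the second. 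Using Theorem~\ref{thm2.3} one step at a time, I would couple $(X_n)$ and $(X'_n)$ so that whenever they currently agree at a point $z$ they agree again at the next step with probability $1-\lVert P_{A_n}(z,\cdot)-P_{A_{K-N}}(z,\cdot)\rVert$, and otherwise run on independently; the $X$-coordinate of this construction reproduces MTMC exactly (index updates included), while the $X'$-coordinate ignores the index updates. Let $G$ be the event $\{X_n=X'_n\text{ for every }n\in\{K-N,\dots,K\}\}$.

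On the block $K-N\le n\le K-1$ the realized indices satisfy $A_{K-N}\le A_n\le A_{K-N}+N$, so telescoping the definition of $D_m$ gives $\lVert P_{A_n}(z,\cdot)-P_{A_{K-N}}(z,\cdot)\rVert\le\sum_{j=0}^{N-1}D_{A_{K-N}+j}$, and a union bound over the $N$ steps yields
\begin{equation*}
Prob(G^c)\ \le\ N\sum_{j=0}^{N-1}\mathbb{E}\,D_{A_{K-N}+j}.
\end{equation*}
Since $A_{K-N}\to\infty$ almost surely, $0\le D_m\le 1$, and $D_m\to 0$, dominated convergence gives $Prob(G^c)\to 0$ as $K\to\infty$. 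Now combine: by the coupling inequality (Theorem~\ref{thm2.2}, or directly its proof) $\lVert\mathcal{L}(X_K)-\mathcal{L}(X'_K)\rVert\le Prob(X_K\neq X'_K)\le Prob(G^c)$, while taking expectations in the conditional bound above gives $\lVert\mathcal{L}(X'_K)-p\rVert\le\epsilon+\mathbb{E}\,\delta(A_{K-N})$ with $\mathbb{E}\,\delta(A_{K-N})\to 0$. Hence
\begin{equation*}
TV(\mathbf{x},a_0,K)=\lVert\mathcal{L}(X_K)-p\rVert\ \le\ Prob(G^c)+\epsilon+\mathbb{E}\,\delta(A_{K-N}),
\end{equation*}
and letting $K\to\infty$ gives $\limsup_K TV(\mathbf{x},a_0,K)\le\epsilon$. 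As $\epsilon>0$ was arbitrary, $TV(\mathbf{x},a_0,K)\to 0$, i.e.\ the chain is ergodic in the sense of Definition~\ref{def2.6}.

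\smallskip

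The step I expect to be the main obstacle is making this three-way coupling fully rigorous: one must check that the $X$-coordinate of the coupled process is genuinely the MTMC chain (the index evolution must couple in consistently, in particular after the chains decouple), and one must be careful with the conditional expectations, since $D_m$ is a $\mathcal{G}_{m+1}$-measurable random variable and the adaptation index $A_n$ is path-dependent. A related, easily overlooked point is the passage from the conditions ``$D_m\to 0$'' and ``$\delta(m)\to 0$'', which are stated along the deterministic index $m$, to the statements $\mathbb{E}\,D_{A_{K-N}+j}\to 0$ and $\mathbb{E}\,\delta(A_{K-N})\to 0$ used above; this requires $A_n\to\infty$ almost surely, which is precisely where the standing assumption that the proposal eventually visits all of $\mathcal{H}$ (forcing infinitely many acceptances) must be invoked, and where one should make sure the mode of convergence of $D_m$ (a.s.\ versus in probability) is strong enough.
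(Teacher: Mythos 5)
Your proposal is correct and follows essentially the same route as the paper: both run the Roberts--Rosenthal adaptive-MCMC argument, coupling the true chain over the block $[K-N,K]$ with a frozen chain driven by the single kernel $P_{A_{K-N}}$ via the step-by-step construction of Theorem \ref{thm2.3}, using diminishing adaptation ($D_m$) to bound the probability of decoupling, the first hypothesis to place $\mathcal{L}(X'_K)$ within $\epsilon+\delta$ of $p$, and the coupling inequality to conclude. The only (cosmetic) differences are that the paper controls the adaptation terms through the high-probability event $E=\bigcap_i H_i^c$ with $Prob(H_m)\le\epsilon/N$ rather than your union bound plus dominated convergence along the random index $A_n$, and it closes the argument by constructing a third chain $Z\sim p$ instead of comparing $\mathcal{L}(X'_K)$ to $p$ directly in total variation.
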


\begin{proof}
Let $\epsilon >0$. Choose $N=N(\epsilon)$ as in the first condition.
Let $H_m = \{D_m \geq \frac{\epsilon}{N^2} \}$ and choose $m^{*}=m^{*}(\epsilon)\in \mathbb{N}$ large enough such that 
\begin{equation*}
    Prob(H_m) \leq \frac{\epsilon}{N}
\end{equation*}
for $m \geq m^{*}$.
Define the event 
\begin{equation*}
    E := \bigcap_{i=m+1}^{m+N} H_i^c.
\end{equation*}
Then
\begin{equation}
\label{in3}
    Prob(E) \geq 1-\epsilon
\end{equation}
by de Morgan's Law.
By triangle's inequality we have
\begin{equation*}
    \lVert P_{m+k}(\mathbf{x},\cdot)-P_{m}(\mathbf{x},\cdot) \rVert < \frac{1-\epsilon}{N}
\end{equation*}
on $E$ for all $k \leq N$.
Fix some $K \geq m^{*}+N$. Notice that 
\begin{equation}
\label{knt}
    \lVert P_{K-N}(\mathbf{x},\cdot)-P_{t}(\mathbf{x},\cdot) \rVert < \frac{1-\epsilon}{N}
\end{equation}
on $E$ for $K-N \leq t \leq K$.
Now we construct the coupling argument:
First construct the original chain $\{X_n\}$ together with its approximation sequence $\{A_n\}$. Assume it starts with $X_0=\mathbf{x}$ and ${a_0}$.

\textbf{Claim}
We can construct a second chain $\{X^{'}_n\}_{n=K-N}^K$ on $E$ such that:
\begin{enumerate}
    \item $X^{'}_{K-N} = X_{K-N}$;
    \item $X^{'}_n \sim P_{K-N} (X^{'}_{n-1}, \cdot)$ for $K-N +1 \leq n \leq K$;
    \item $Prob(X^{'}_i = X_i, \ for \ K-N \leq i \leq t) \geq 1- (t-(K-N)) \frac{\epsilon}{N}$ for $K-N \leq t \leq K$.
\end{enumerate}
We prove the claim via induction.
\begin{enumerate}
    \item First notice that the claim is trivially true for $t=K-N$;
    \item Suppose it is also true for $t$;
    \item By induction hypothesis we know that $X^{'}_i = X_i$ for all $K-N \leq i \leq t$, $X^{'}_{t+1} \sim P_{K-N} (X^{'}_{t}, \cdot)$ and $X_{t+1} \sim P_t (X_{t}, \cdot)$.
    It follows that the conditional distribution of $X^{'}_{t+1}$ and $X_{t+1}$ are within the range $\frac{\epsilon}{N}$ since we have (\ref{knt}) on $E$.
    Then $X^{'}_{t+1} = X_{t+1}$ with probability $1- \frac{\epsilon}{N}$.
    Thus
    \begin{equation*}
    \begin{split}
        &Prob(X^{'}_i = X_i \ for \ K-N \leq i \leq t+1) \\
        &\geq Prob(X^{'}_i = X_i \ for \ K-N \leq i \leq t)(1-\frac{\epsilon}{N})\\
        &\geq (1- (t-(K-N)) \frac{\epsilon}{N})(1-\frac{\epsilon}{N})\\
        &\geq (1- (t+1-(K-N)) \frac{\epsilon}{N})
    \end{split}
    \end{equation*}
\end{enumerate}
This proves the claim.
In particular, the claim shows that 
\begin{equation*}
\begin{split}
    &Prob(X^{'}_K = X_K) \\
    &\geq 1-(K-(K-N)) \frac{\epsilon}{N}\\
    &= 1-\epsilon.
\end{split}
\end{equation*}
That is:
\begin{equation}
\label{in1}
    Prob(X^{'}_K \neq X_K, E) <  \epsilon.
\end{equation}
Now 
\begin{equation*}
    \lVert P^N_{K-N} (X^{'}_{K-N}, \cdot) - p(\cdot) \rVert \leq \epsilon + \delta(K-N)
\end{equation*}
conditioning on $X_{K-N}$ as in the first assumption.
Integrating of the distribution of $X_{K-N}$ gives:
\begin{equation*}
    \lVert \mathcal{L}(X^{'}_K)-p(\cdot)\rVert < \epsilon + \delta(K-N)
\end{equation*}
We can construct a third chain $Z \sim p(\cdot)$ as showed in Theorem \ref{thm2.3} such that 
\begin{equation}
\label{in2}
    Prob(X^{'}_K \neq Z) < \epsilon + \delta(K-N). 
\end{equation}
So we have for \ref{in1}, \ref{in2} and \ref{in3}:
\begin{equation*}
\begin{split}
    Prob(X_K \neq Z) &\leq Prob(X^{'}_K \neq X_K, E) + Prob(X^{'}_K \neq Z) +Prob(E^c) \\
    &\leq \epsilon + \epsilon + \delta(K-N) + \epsilon \\
    &= 3\epsilon + \delta(K-N).
\end{split}
\end{equation*}
Using Theorem \ref{thm2.2} we have $\lVert \mathcal{L}(X_K) - p(\cdot) \rVert < 3\epsilon + \delta(K-N)$.

Since $m \geq m^{*}$ and $K \geq m^{*} + N$ were arbitrary, $T(\mathbf{x},a_0,K) < 3 \epsilon + \delta(K-N)$ for $K$ large enough. As assumed, We have $\delta$ converges to $0$, thus
\begin{equation*}
    \lim_{n \rightarrow \infty} T(\mathbf{x},a_0,n) =0.
\end{equation*}
\end{proof}

Roberts and Rosenthal \cite{proofofconvergence} used this technique to prove the convergence of an adaptive MCMC. We prove the convergence of MTMC based on its structure.

Now, we consider the 'Moving Target' version of the \emph{Birkhoff–Khinchin theorem}. 

\begin{theorem}(Birkhoff-Khinchin)
Let $(\mathcal{H},\sigma(\mathcal{H}),p)$ be a probability space. Let $e: \mathcal{H} \rightarrow \mathbb{R}$ be a bounded measurable function. Let $P$ be a measure-preserving map. Then with probability $1$
\begin{equation}
\label{BK}
    \lim_{n \rightarrow \infty} \frac{\sum_{k=1}^n e((P)^k \mathbf{x})}{n} =  \int_{\mathcal{H}} e dp.
\end{equation}
\end{theorem}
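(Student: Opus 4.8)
The statement as displayed is the classical pointwise (Birkhoff--Khinchin) ergodic theorem, so the plan is to prove it by the standard route through the \emph{maximal ergodic theorem}, and then to indicate how the genuinely new ``moving target'' version --- the same conclusion with $P^{k}\mathbf{x}$ replaced by the MTMC sample path $X_{k}$ --- follows by grafting this onto the ergodicity already established in Theorem~\ref{thm2.6}.

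First I would establish Garsia's maximal inequality: with $S_{0}e = 0$, $S_{n}e = \sum_{k=0}^{n-1} e\circ P^{k}$ and $M_{N}e = \max_{0\le n\le N} S_{n}e$, one has $\int_{\{M_{N}e > 0\}} e\,dp \ge 0$ for every $N$. This comes from the pointwise bound $S_{n+1}e = e + (S_{n}e)\circ P \le e + (M_{N}e)^{+}\circ P$ for $n\le N$, taking the maximum over $n$, integrating over $\{M_{N}e>0\}$, and using that $P$ preserves $p$ so that $\int (M_{N}e)^{+}\circ P\,dp = \int (M_{N}e)^{+}\,dp$. Applying this to $e-\alpha$ restricted to the $P$-invariant set $E_{\alpha} = \{\limsup_{n} \tfrac{1}{n} S_{n}e > \alpha\}$ gives $\int_{E_{\alpha}} e\,dp \ge \alpha\,p(E_{\alpha})$, and symmetrically $\int_{F_{\beta}} e\,dp \le \beta\,p(F_{\beta})$ for $F_{\beta} = \{\liminf_{n} \tfrac{1}{n} S_{n}e < \beta\}$; intersecting over rationals $\beta < \alpha$ forces $p(F_{\beta}\cap E_{\alpha}) = 0$, so $\tfrac{1}{n} S_{n}e$ converges $p$-a.e.\ to a $P$-invariant limit $\bar e$, and boundedness of $e$ with dominated convergence yields $\int \bar e\,dp = \int e\,dp$. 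Under the standing ergodicity assumption (equivalently, $p$ is the \emph{unique} stationary distribution of $P$, in the sense of Section~2) the invariant $\bar e$ is $p$-a.s.\ constant, hence equal to $\int_{\mathcal{H}} e\,dp$, which is~(\ref{BK}). These steps are all routine.

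For the moving-target chain $(X_{k})$ the same averages must be handled without the Markov or stationarity properties. Here the plan is: fix $\epsilon>0$; by (\ref{req2}), (\ref{req3}) and the estimates behind Theorem~\ref{thm2.6} choose $m^{*}$ so that for $m\ge m^{*}$ the kernel $P_{m}$ has stationary law $a_{m}$ with $\lVert a_{m}-p\rVert$ small and successive kernels stay within $\epsilon$ of one another over windows of length $N(\epsilon)$; couple $(X_{k})_{k\ge m^{*}}$, window by window, to reference chains driven by fixed kernels $P_{m}$ exactly as in the proof of Theorem~\ref{thm2.6}, so that the MTMC path and the reference path coincide off an event of probability $O(\epsilon)$ on each window; apply the pointwise ergodic theorem above to each reference chain (with the correction $\lVert a_{m}-p\rVert \to 0$); and use $\lVert e\rVert_{\infty} < \infty$ to convert the $O(\epsilon)$ path discrepancy per window into an $O(\epsilon)$ error in the Cesàro mean $\tfrac{1}{n}\sum_{k=1}^{n} e(X_{k})$. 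Sending $\epsilon\downarrow 0$ and upgrading convergence in probability to almost-sure convergence by a Borel--Cantelli argument gives the moving-target conclusion.

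The main obstacle is the last step: controlling the \emph{accumulated} effect of the perpetually changing approximations $a_{m}$ over the long time windows that the ergodic average needs in order to stabilise, i.e.\ strengthening the stepwise smallness of (\ref{req3}) to a summable, quantitative bound so that the coupling errors add up to something finite rather than merely vanishing one window at a time. This is exactly the kind of rate estimate developed in Section~5, and I would invoke it there.
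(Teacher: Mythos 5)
Your core proof of the stated theorem is correct, but note that the paper itself does not prove this statement: it is quoted as the classical Birkhoff--Khinchin ergodic theorem and used as a black box, with the paper's original work residing in the subsequent ``Moving Target'' proposition. Your route --- Garsia's maximal inequality, then the $E_{\alpha}$/$F_{\beta}$ argument over rational levels to get a.e.\ convergence to a $P$-invariant limit $\bar e$, then dominated convergence to identify $\int \bar e\,dp = \int e\,dp$ --- is the standard textbook proof and is sound. You also correctly repair a defect in the statement as written: with only ``$P$ measure-preserving'' the a.e.\ limit is the conditional expectation of $e$ on the invariant $\sigma$-algebra, not the constant $\int_{\mathcal{H}} e\,dp$, so ergodicity (uniqueness of the stationary law) must be assumed for (\ref{BK}) to hold; your explicit invocation of it matches what the paper's surrounding prose tacitly intends. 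Your last two paragraphs address a different result, namely the paper's moving-target proposition for the MTMC path $X_{k}$. There your windowed-coupling plan mirrors the paper's own argument, but be aware that the paper only proves convergence \emph{in probability} (via the coupling bound and Markov's inequality), whereas your proposed Borel--Cantelli upgrade to almost-sure convergence needs summable quantitative rates that neither (\ref{req3}) nor the Section~5 estimates (which fix $m$) supply; as you yourself note, that step is an open gap, so it should be presented as a conjectural strengthening rather than part of the proof.
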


The left hand side of (\ref{BK}) is the so-called time-average and the right hand side space-average. Intuitively, we can summarise the theorem as: if the transformation kernel is ergodic, and the measure is invariant, then the time average is equal to the space average almost everywhere (convergence in probability). If we are dealing with a Markov chain, the application is immediate. However, with an MTMC chain, more assumptions are needed. In the following, we prove the 'Moving Target' version of the Birkhoff-Khinchin theorem with some coupling constructions used in the previous proof of Theorem \ref{thm2.6}. We use the proof technique based on Roberts' and Rosenthal's proof \cite{proofofconvergence} of adaptive Markov chains.

\begin{proposition}
Suppose we have an MTMC chain with the conditions in Theorem \ref{thm2.6}. Let $e: \mathcal{H} \rightarrow \mathbb{R}$ be a bounded measurable function. Then if the chain starts in $\mathbf{x} \in \mathcal{H}$ and $0 \in A$,
\begin{equation*}
    \frac{\sum_{k=1}^n e(X_k)}{n} \rightarrow \int_{\mathcal{H}} e dp
\end{equation*}
in probability for $n \rightarrow \infty$. 
\end{proposition}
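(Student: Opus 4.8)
The plan is to transfer the ergodic averaging statement for a \emph{fixed} ergodic Markov kernel to the moving-target chain by the same three-chain coupling device used in the proof of Theorem~\ref{thm2.6}. First I would fix $\epsilon>0$ and $N=N(\epsilon)$ as in the first hypothesis, and note that since $e$ is bounded, say $|e|\le M$, it suffices to control the Cesàro average over a long block. I would split the time axis $\{1,\dots,n\}$ into consecutive blocks of length $N$ (plus a negligible remainder of length $<N$, which contributes at most $MN/n\to 0$ to the average). On each block $[K-N,K]$ I reuse the construction from Theorem~\ref{thm2.6}: on the high-probability event $E$ (with $Prob(E)\ge 1-\epsilon$) I build the comparison chain $\{X'_n\}$ driven by the frozen kernel $P_{K-N}$, agreeing with the true chain $\{X_n\}$ with probability $\ge 1-\epsilon$, and then compare $X'$ to a stationary copy $Z\sim p$. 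The upshot, exactly as there, is that the law of $X_K$ is within total variation $3\epsilon+\delta(K-N)$ of $p$, and moreover the \emph{joint} coupling can be arranged so that $X$ and a $p$-stationary trajectory agree on the whole block with probability $\ge 1-(3\epsilon+\delta(K-N))$.

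Next I would invoke the classical Birkhoff–Khinchin theorem for the fixed kernel: for a chain of length $L$ started from $p$ (or from anything, using ergodicity), $L^{-1}\sum_{k=1}^{L} e(\cdot)\to \int e\,dp$ in probability, hence in $L^2$ for bounded $e$, so there is $L_0(\epsilon)$ with $Prob\big(|L^{-1}\sum e - \int e\,dp|>\epsilon\big)<\epsilon$ for $L\ge L_0$. Choosing the block length to be a large multiple of $N$ exceeding $L_0$, and coupling each block's true-chain segment to a fresh stationary segment, I get that the within-block average $\frac1{B}\sum_{k\in\text{block}} e(X_k)$ is within $2\epsilon$ of $\int e\,dp$ with probability $\ge 1-(2\epsilon + \text{coupling failure})$. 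Summing the block averages, using $\delta(K-N)\to 0$ so that for $n$ large every block beyond a fixed index has $\delta<\epsilon$, and absorbing the finitely many early blocks into an $O(1/n)$ error, a union bound over the $\lfloor n/B\rfloor$ blocks plus Markov's inequality on the accumulated deviations gives $Prob\big(|n^{-1}\sum_{k=1}^n e(X_k)-\int e\,dp|>C\epsilon\big)\to 0$ as $n\to\infty$, for an absolute constant $C$. Since $\epsilon$ was arbitrary, this is convergence in probability.

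The main obstacle I anticipate is bookkeeping the \emph{accumulation} of coupling errors across a number of blocks that grows with $n$: each block contributes an error term of size $\epsilon$ times the fraction of time it occupies, and the coupling-failure events must be handled so their total probability does not blow up. The clean way around this is to not ask all blocks to succeed simultaneously, but to bound $\mathbb{E}\big|n^{-1}\sum_{k=1}^n e(X_k)-\int e\,dp\big|$ directly: write it as $n^{-1}\sum_{\text{blocks}} \big(\text{block sum} - B\!\int e\,dp\big)$, bound the expected absolute value of each block's contribution by $B(2\epsilon + 2M\cdot\text{coupling-failure prob} + \text{Birkhoff error})$, and note the coupling-failure probability for the block ending at $K$ is $\le 3\epsilon+\delta(K-N)$, whose block-average over $k\le n$ tends to $3\epsilon$ since $\delta\to 0$. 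Then Markov's inequality finishes it. A secondary subtlety is that, unlike in Theorem~\ref{thm2.6}, I need the coupling to control a whole \emph{path} segment rather than a single marginal $X_K$; but the construction in the proof of that theorem already couples $X'_i=X_i$ for \emph{all} $i$ in the block on $E$, and Theorem~\ref{thm2.3} applied pathwise (or iterated) couples $X'$ to $Z$ along the block, so the path-level statement is available with the same estimates.
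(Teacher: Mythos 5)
Your proposal follows essentially the same route as the paper: decompose the Cesàro average into blocks of length $N$ (plus negligible initial and remainder pieces), compare each block of the MTMC chain to the frozen-kernel chain via the coupling event $E$ constructed in the proof of Theorem \ref{thm2.6}, control the frozen-kernel block average by the ergodic theorem up to the $\delta(m)$ gap, bound the expectation of the total deviation, and conclude with Markov's inequality. The only soft spot — the claim that Theorem \ref{thm2.3} ``applied pathwise'' couples $X'$ to a stationary trajectory with the same estimates — is not actually needed, since your fallback of bounding the expected block error directly is precisely the paper's argument (its inequality (\ref{ine1})) and already suffices.
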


\begin{proof}
Without loss of generality, we take $\int_{\mathcal{H}} e dp=0$. Otherwise add the term $\int_{\mathcal{H}} e dp$ to the right hand side of every equation.

Let $\Tilde{e}=\sup_{\mathbf{x} \in \mathcal{H}} \lvert e(\mathbf{x}) \rvert < \infty$.
Let $\mathbf{E}_{a_m,\mathbf{x}}$ be the expectation with respect to the transition kernel $P_{m}$ when the chain starts from $X_0=\mathbf{x}$. Let $\mathbf{E}$ and $\mathbf{P}$ be the expectations and probabilities with respect to the varying target distributions $a_0, a_1, a_2, \dots$ of the MTMC chain.

Using Theorem \ref{BK}, the first condition of Theorem \ref{thm2.6} implies that given $\epsilon$, we can find a $N=N(\epsilon)$ with
\begin{equation*}
    \mathbf{E}_{a_m,\mathbf{x}} \bigg( \bigg \lvert \frac{\sum_{k=1}^N e(X_k)}{N} \bigg \rvert \bigg) \leq \epsilon + \delta(m)
\end{equation*}
for all $m \in A$ and $\mathbf{x} \in \mathcal{H}$.

Then find $m^{*}$ as in the proof of Theorem \ref{thm2.6} to construct the coupling on $E$ as before. Suppose $\lvert e \rvert$ is bounded by $\Tilde{e}$, we have for (\ref{in1}), (\ref{in2}) and (\ref{in3})
\begin{equation}
\label{ine1}
\begin{split}
    &\mathbf{E}\bigg(\bigg \lvert \frac{\sum_{k=m+1}^{m+N} e(X_k)}{N}\bigg \rvert | \mathcal{G}_m \bigg) \\
    &\leq \mathbf{E}_{a_m,\mathbf{x}} \bigg( \bigg \lvert \frac{\sum_{k=1}^N e(X_k)}{N} \bigg \rvert \bigg) + \Tilde{e} \epsilon + \Tilde{e} \mathbf{P}(E^c) \\
    &= \epsilon + \delta(m) + 2 \Tilde{e} \epsilon
\end{split}
\end{equation}
where $m \geq m^{*}$.
Let $\Tilde{N}$ be large enough such that
\begin{equation}
\label{ine2}
    \max \bigg(\frac{\Tilde{e} m^{*}}{\Tilde{N}}, \frac{\Tilde{e} N}{\Tilde{N}}  \bigg) \leq \epsilon.
\end{equation}
Then
\begin{equation*}
\begin{split}
    \bigg \lvert \frac{\sum_{k=1}^{\Tilde{N}} e(X_k)}{\Tilde{N}}\bigg \rvert \leq& \bigg \lvert \frac{\sum_{k=1}^{\Tilde{m^{*}}} e(X_k)}{\Tilde{N}} \bigg \rvert+
    \bigg \lvert \frac{1}{\lfloor \frac{\Tilde{N}-m^{*}}{N}\rfloor} \sum_{k_1=1}^{\lfloor \frac{\Tilde{N}-m^{*}}{N}\rfloor} \frac{1}{N} \sum_{k_2=1}^N e(X_{m^{*}+(k_1-1)N+k_2})\bigg \rvert + \\
    &\bigg \lvert \frac{\sum_{k=m^{*}+\lfloor \frac{\Tilde{N}-m^{*}}{N}\rfloor N+1 }^{\Tilde{N}} e(X_k)}{\Tilde{N}} \bigg \rvert
\end{split}
\end{equation*}
In this equation, the left hand side has a sum ranging over $[1,\Tilde{N}]$. This sum is decomposed on the right hand side to $[1, m^{*}]$, $[m^{*}+1,m^{*}+\lfloor \frac{\Tilde{N}-m^{*}}{N}\rfloor N]$ and $[m^{*}+\lfloor \frac{\Tilde{N}-m^{*}}{N}\rfloor N, \Tilde{N}]$. Each term is normalised over some constant smaller or equal to $\Tilde{N}$. Thus the right hand side is indeed larger or equal to the left hand side.

The first and third term of the right hand side are both smaller than $\epsilon$ since (\ref{ine2}) and the boundedness of $e$. The second term takes average of the sum in which each term is smaller or equal to $\epsilon + \delta(m) + 2 \Tilde{e} \epsilon$ according to (\ref{ine1}).

Taking expectation on both sides, we have
\begin{equation*}
    \mathbf{E}\bigg(\bigg \lvert \frac{\sum_{k=1}^{\Tilde{N}} e(X_k)}{\Tilde{N}}\bigg \rvert \bigg) \leq \epsilon + \epsilon + \delta(m) + 2 \Tilde{e} \epsilon + \epsilon = 3\epsilon +\delta(m) + 2 \Tilde{e} \epsilon
\end{equation*}
Using Markov's inequality, i.e. $\mathbf{P}(X > c) \leq \frac{\mathbf{E}(X)}{c}$ for some constant $c>0$, we have
\begin{equation*}
    \mathbf{P}\bigg(\bigg \lvert \frac{\sum_{k=1}^{\Tilde{N}} e(X_k)}{\Tilde{N}}\bigg \rvert > \epsilon^{\frac{1}{2}} \bigg) \leq (3+2 \Tilde{e})\epsilon ^{\frac{1}{2}}+\frac{\delta(m)}{\epsilon^{\frac{1}{2}}}
\end{equation*}
Since we can choose $m^{*}$ large enough so that for all $m \geq m^{*}$,  $\frac{\delta (m)}{\epsilon^{\frac{1}{2}}} \rightarrow 0$ for $m \rightarrow \infty$ and $\epsilon$ is arbitrary, the result follows.
\end{proof}

Before we proceed, it is necessary to confirm the first condition of Theorem \ref{thm2.6}. It is sufficient to show that $a_m$ is the (unique) stationary distribution of $P_m$ for all $m \in A$. Notice that when fixing the approximation distribution, the chain is Markovian. 

\begin{lemma}
The approximation distribution $a_m$ is a stationary distribution of the transition kernel $P_m$ for all $m \in A$ where $A$ is the index set for approximation distribution.
\end{lemma}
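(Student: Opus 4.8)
The key observation is that once the approximation $a_m$ is frozen, the kernel $P_m$ is nothing but the ordinary Metropolis--Hastings kernel whose \emph{target} is $a_m$ and whose proposal is $Q$: the acceptance probability $\alpha_2(\mathbf{x},\mathbf{y})=\min\bigl(1,\frac{a_m(\mathbf{y})}{a_m(\mathbf{x})}\frac{Q(\mathbf{y},\mathbf{x})}{Q(\mathbf{x},\mathbf{y})}\bigr)$ is exactly $\alpha_1$ from \eqref{alpha1} with $p(\cdot|\mathbf{d})$ replaced by $a_m$. So the plan is to prove the classical statement that the MH kernel is reversible with respect to its target, which in turn implies stationarity; no Markov-chain-specific subtlety is involved here because $m$ is held fixed throughout.

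First I would write $P_m$ explicitly as the sum of a ``move'' part and a ``stay'' part: for a measurable $B$,
\begin{equation*}
P_m(\mathbf{x},B)=\int_B Q(\mathbf{x},\mathbf{y})\,\alpha_2(\mathbf{x},\mathbf{y})\,d\mathbf{y}\;+\;\rho_m(\mathbf{x})\,\mathbf{1}_{\{\mathbf{x}\in B\}},
\end{equation*}
where $\rho_m(\mathbf{x})=1-\int_{\mathcal{H}}Q(\mathbf{x},\mathbf{y})\,\alpha_2(\mathbf{x},\mathbf{y})\,d\mathbf{y}$ is the total rejection probability at $\mathbf{x}$. Then I would check the detailed-balance (reversibility) identity $a_m(\mathbf{x})\,Q(\mathbf{x},\mathbf{y})\,\alpha_2(\mathbf{x},\mathbf{y})=a_m(\mathbf{y})\,Q(\mathbf{y},\mathbf{x})\,\alpha_2(\mathbf{y},\mathbf{x})$ for $\mathbf{x}\neq\mathbf{y}$. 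This follows from the elementary identity $s\min(1,t/s)=\min(s,t)=t\min(1,s/t)$ applied with $s=a_m(\mathbf{x})Q(\mathbf{x},\mathbf{y})$ and $t=a_m(\mathbf{y})Q(\mathbf{y},\mathbf{x})$: both sides equal the symmetric quantity $\min\bigl(a_m(\mathbf{x})Q(\mathbf{x},\mathbf{y}),\,a_m(\mathbf{y})Q(\mathbf{y},\mathbf{x})\bigr)$. The diagonal ``stay'' part contributes the measure $\rho_m(\mathbf{x})\,a_m(d\mathbf{x})$ on the diagonal, which is trivially symmetric. Integrating detailed balance then gives, for every measurable $B$,
\begin{equation*}
\int_{\mathcal{H}}a_m(d\mathbf{x})\,P_m(\mathbf{x},B)=\int_B\!\!\int_{\mathcal{H}} a_m(\mathbf{y})Q(\mathbf{y},\mathbf{x})\alpha_2(\mathbf{y},\mathbf{x})\,d\mathbf{y}\,d\mathbf{x}+\int_B\rho_m(\mathbf{x})a_m(d\mathbf{x})=a_m(B),
\end{equation*}
where in the last step the inner integral over $\mathbf{y}$ recombines with the rejection term to reconstitute the normalising $1=\int Q(\mathbf{x},\mathbf{y})\alpha_2(\mathbf{x},\mathbf{y})\,d\mathbf{y}+\rho_m(\mathbf{x})$. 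Hence $a_m P_m=a_m$, i.e.\ $a_m$ is stationary for $P_m$.

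The only points requiring a little care, rather than genuine obstacles, are: (i) the bookkeeping of the atom at $\mathbf{x}$ in $P_m$, which is why I split off the ``stay'' part explicitly and treat it as a measure on the diagonal of $\mathcal{H}\times\mathcal{H}$; (ii) the fact that $a_m$ produced by the interpolation in \eqref{req1} may be unnormalised — but the ratio $a_m(\mathbf{y})/a_m(\mathbf{x})$ in $\alpha_2$ is scale-invariant, so detailed balance, and hence stationarity, holds equally for the normalised $\hat k_m^{-1}a_m$, which is the ``stationary distribution'' in the statement; and (iii) positivity of $a_m$ on $\mathcal{H}$ (and of $Q$ where it matters), which the paper has already assumed, so the ratios are well defined and Fubini applies to justify swapping the order of integration above. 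I expect no step to be a real difficulty; this lemma is the standard reversibility computation for Metropolis--Hastings, repackaged for the frozen-$m$ kernel $P_m$.
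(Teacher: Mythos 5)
Your proposal is correct and takes essentially the same route as the paper: establish detailed balance of the frozen kernel $P_m$ with respect to $a_m$ (recognising it as an ordinary Metropolis--Hastings kernel targeting $a_m$) and then integrate to conclude stationarity. Your use of the symmetric identity $\min\bigl(a_m(\mathbf{x})Q(\mathbf{x},\mathbf{y}),a_m(\mathbf{y})Q(\mathbf{y},\mathbf{x})\bigr)$ and your explicit treatment of the rejection atom and of the normalising constant are just a more careful bookkeeping of the same argument, which the paper carries out via a without-loss-of-generality case split on which $\min$ equals one and an implicit use of $\int_{\mathcal{H}}P_m(\mathbf{y},\mathbf{x})\,d\mathbf{x}=1$.
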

\begin{proof}
We first show the \emph{detailed balance condition}, i.e.
\begin{equation}
\label{db}
    P_m(\mathbf{x},\mathbf{y})a_m(\mathbf{x})=P_m(\mathbf{y},\mathbf{x})a_m(\mathbf{y})
\end{equation} 

The transition probabilities from $\mathbf{x}$ to $\mathbf{y}$ and from $\mathbf{y}$ to $\mathbf{x}$ are
\begin{equation*}
P_m(\mathbf{x},\mathbf{y})=Q(\mathbf{x},\mathbf{y})\min\bigg(1,\frac{a_m(\mathbf{x})}{a_m(\mathbf{y})}\frac{Q(\mathbf{y},\mathbf{x})}{Q(\mathbf{x},\mathbf{y})}\bigg)
\end{equation*}
and
\begin{equation*}
P_m(\mathbf{y},\mathbf{x})=Q(\mathbf{y},\mathbf{x})\min\bigg(1,\frac{a_m(\mathbf{y})}{a_m(\mathbf{x})}\frac{Q(\mathbf{x},\mathbf{y})}{Q(\mathbf{y},\mathbf{x})}\bigg)
\end{equation*}
Either $\min \Big(1,\frac{a_m(\mathbf{x})}{a_m(\mathbf{y})}\frac{Q(\mathbf{y},\mathbf{x})}{Q(\mathbf{x},\mathbf{y})}\Big)$ or $\min\Big(1,\frac{a_m(\mathbf{y})}{a_m(\mathbf{x})}\frac{Q(\mathbf{x},\mathbf{y})}{Q(\mathbf{y},\mathbf{x})}\Big)$ is equal to $1$.
Without loss of generality, we assume $\min \Big(1,\frac{a_m(\mathbf{y})}{a_m(\mathbf{x})}\frac{Q(\mathbf{x},\mathbf{y})}{Q(\mathbf{y},\mathbf{x})}\Big)=1$. Then
\begin{equation*}
    \frac{P_m(\mathbf{x},\mathbf{y})}{P_m(\mathbf{y},\mathbf{x})}=\frac{Q(\mathbf{x},\mathbf{y})\frac{a_m(\mathbf{x})}{a_m(\mathbf{y})}\frac{Q(\mathbf{y},\mathbf{x})}{Q(\mathbf{x},\mathbf{y})}}{Q(\mathbf{y},\mathbf{x})} = \frac{a_m(\mathbf{x})}{a_m(\mathbf{y})}
\end{equation*}
The detailed balance follows.
If integrate on both sides of (\ref{db}):
\begin{equation}
\label{db2}
\begin{split}
\int_{\mathcal{\mathcal{H}}} P_m(\mathbf{x},\mathbf{y})a_m(\mathbf{x})d\mathbf{x}
&=\int_{\mathcal{H}}P_m(\mathbf{y},\mathbf{x})a_m(\mathbf{y})d{\mathbf{x}}\\
&= a_m(\mathbf{y})\int_{\mathcal{H}} P(\mathbf{y},\mathbf{x})d\mathbf{x}.
\end{split}
\end{equation}
Since $\int_{\mathcal{H}}{P_m(\mathbf{y},\mathbf{x})}d\mathbf{x} = 1$, (\ref{db2}) is equal to $a_m(\mathbf{y})$. The claim follows.
\end{proof}

Now, we show that the stationary distribution $a_m$ is also unique. The proof follows Roberts' and Rosenthal's \cite{couplingconstruct}.

Assume the chain has a positive and continuous proposal density. Further assume that the approximation density is finite on the state space.

\begin{definition}
A chain is called $\psi$-irreducible if there exists a non-zero $\sigma$-finite measure $\psi$ on the state space $\mathcal{H}$ such that for all $B \subseteq \sigma(\mathcal{H})$ with $\psi(B) >0$, and for all $x \in \mathcal{H}$, there exists a positive integer $n=n(x,B)$ such that the $n$-step transition probability $P^n(x,B)>0$.
\end{definition}

\begin{lemma}(Roberts and Rosenthal)\cite{couplingconstruct}
\label{lemma2.10}
The MTMC chain with fixed approximation distribution $a_m$ is $a_m$-irreducible. 
\end{lemma}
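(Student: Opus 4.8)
The plan is to prove $a_m$-irreducibility directly at $n=1$, by exploiting the fact that the Metropolis--Hastings--type kernel $P_m$ has an absolutely continuous component which is strictly positive wherever the proposal density is. First I would write the kernel in the standard mixture form
\begin{equation*}
P_m(\mathbf{x}, d\mathbf{y}) = Q(\mathbf{x},\mathbf{y})\,\alpha_2(\mathbf{x},\mathbf{y})\, d\mathbf{y} + r(\mathbf{x})\,\delta_{\mathbf{x}}(d\mathbf{y}),
\end{equation*}
where $r(\mathbf{x}) = 1 - \int_{\mathcal{H}} Q(\mathbf{x},\mathbf{y})\alpha_2(\mathbf{x},\mathbf{y})\,d\mathbf{y}$ is the total rejection probability. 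I will discard the atomic part and keep only the smooth part $q_m(\mathbf{x},\mathbf{y}) := Q(\mathbf{x},\mathbf{y})\alpha_2(\mathbf{x},\mathbf{y})$, so that $P_m(\mathbf{x},B) \geq \int_B q_m(\mathbf{x},\mathbf{y})\,d\mathbf{y}$ for every measurable $B$.

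Next I would take the irreducibility measure to be the $a_m$-measure itself, $\psi(B) := \int_B a_m(\mathbf{x})\,d\mathbf{x}$. Since $a_m$ is assumed finite on $\mathcal{H}$ and is positive there (equivalently, by the standing assumption that any point of $\mathcal{H}$ is visited with positive probability), $\psi$ is a non-zero probability measure which is mutually absolutely continuous with Lebesgue measure on $\mathcal{H}$; hence $\psi(B)>0$ if and only if $B$ has positive Lebesgue measure. The crucial observation is that $q_m(\mathbf{x},\mathbf{y}) > 0$ for all $\mathbf{x},\mathbf{y} \in \mathcal{H}$: indeed $Q(\mathbf{x},\mathbf{y})>0$ by the positivity hypothesis on the proposal, and
\begin{equation*}
\alpha_2(\mathbf{x},\mathbf{y}) = \min\!\Big(1, \frac{a_m(\mathbf{y})}{a_m(\mathbf{x})}\frac{Q(\mathbf{y},\mathbf{x})}{Q(\mathbf{x},\mathbf{y})}\Big) > 0,
\end{equation*}
because $a_m(\mathbf{x}), a_m(\mathbf{y}), Q(\mathbf{x},\mathbf{y}), Q(\mathbf{y},\mathbf{x})$ all lie in $(0,\infty)$, so the displayed ratio is a strictly positive real and its minimum with $1$ is strictly positive.

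Combining the two observations, fix any $\mathbf{x} \in \mathcal{H}$ and any $B \in \sigma(\mathcal{H})$ with $\psi(B)>0$, so $B$ has positive Lebesgue measure. Then $q_m(\mathbf{x},\cdot)$ is a non-negative measurable function which is strictly positive on $B$, a set of positive Lebesgue measure, whence $\int_B q_m(\mathbf{x},\mathbf{y})\,d\mathbf{y} > 0$; therefore $P_m(\mathbf{x},B) > 0$ with $n = n(\mathbf{x},B) = 1$. By the Definition this is precisely $\psi$-irreducibility with $\psi$ the $a_m$-measure, i.e.\ $a_m$-irreducibility. I expect the only genuinely delicate point to be the bookkeeping about where $a_m$ vanishes: if one did not assume $a_m > 0$ everywhere, one would instead let $\psi$ be $a_m$ restricted to its support $S := \{\mathbf{x} : a_m(\mathbf{x}) > 0\}$, run the one-step argument for $\mathbf{x} \in S$, and handle $\mathbf{x} \notin S$ separately (the acceptance ratio is $1$ when leaving a zero-density point, and in any case the assumption that every point is eventually visited covers this); under the hypotheses actually imposed here this subtlety does not arise and the one-step bound above already suffices.
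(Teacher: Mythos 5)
Your proof is correct, and it reaches the same one-step conclusion as the paper ($n(\mathbf{x},B)=1$) but finishes the argument differently. The paper, following Roberts and Rosenthal, first replaces $B$ by $B_r = B\cap D(r,\mathbf{0})$ with $a_m(B_r)>0$, uses continuity and positivity of $q$ on this bounded set to extract a uniform lower bound $\epsilon$ for $\min\bigl(q(\mathbf{x},\mathbf{y}),q(\mathbf{y},\mathbf{x})\bigr)$, and then splits $B_r$ into the region where the acceptance ratio equals $1$ and the region where it is proportional to $\tilde a_m(\mathbf{y})/\tilde a_m(\mathbf{x})$, arguing that at least one of the two resulting terms is strictly positive; this yields an explicit, minorisation-flavoured lower bound of the kind the paper reuses in Section 5.2. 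You instead keep only the absolutely continuous component $Q(\mathbf{x},\mathbf{y})\alpha_2(\mathbf{x},\mathbf{y})$ and observe that it is strictly positive wherever $a_m$ is, so that its integral over any set of positive Lebesgue measure (hence over any $B$ with $a_m(B)>0$, since $a_m \ll \mathrm{Leb}$) is positive. This is more elementary: no ball truncation, no uniform $\epsilon$, and in fact no use of continuity of $q$ at all, only its positivity; the price is that the bound is purely qualitative, whereas the paper's version produces a quantitative estimate. Two small bookkeeping points you already anticipate: you only need $a_m(B)>0 \Rightarrow \mathrm{Leb}(B\cap\{a_m>0\})>0$ (one direction of absolute continuity, which is all the hypotheses give), and the positivity of $\alpha_2$ on the $a_m$-null set fails, but restricting the integrand to $B\cap\{a_m>0\}$, as you indicate, closes this; the paper has the analogous implicit assumption $\tilde a_m(\mathbf{x})>0$ at the current point, so neither argument is worse off here.
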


$Remark.$ This means that if there exists some set $B$ on the state space with $a_m(B)>0$, the chain could get to $B$ in finitely many steps.

\begin{proof}
Let $\Tilde{a}_m$ be possibly unnormalised density function of $a_m$ with respect to Lebesgue measure. Let $q$ be a continuous and positive proposal density.
    
Define $B_r = B \cap D(r,\mathbf{0})$ where $ D(r,\mathbf{0})$ is an open ball with radius $r$ and centre $\mathbf{0}$.
    
Let $a_m(B)>0$. Then there exists radius $r>0$ such that $a_m(B_r)>0$. For any $\mathbf{x}$ we have for continuity:
\begin{equation*}
\inf_{\mathbf{y} \in B_r} min(q(\mathbf{y},\mathbf{x}),q(\mathbf{x},\mathbf{y})) \geq \epsilon 
\end{equation*}
for some $\epsilon >0$.
Thus 
\begin{equation*}
    \begin{split}
        P_m(\mathbf{x},B) &\geq P_m(\mathbf{x},B_r) \geq \int_{B_r} q(\mathbf{x},\mathbf{y}) \min \bigg(1,\frac{\Tilde{a}_m(\mathbf{y})}{\Tilde{a}_m(\mathbf{x})}\frac{q(\mathbf{y},\mathbf{x})}{q(\mathbf{x},\mathbf{y})}\bigg) d\mathbf{y}\\
        &\geq \epsilon Leb(\left\{\mathbf{y} \in B_r: \Tilde{a}_m(\mathbf{y}) \geq \Tilde{a}_m(\mathbf{x}) \right\} +\\
        &\frac{\epsilon \Tilde{k}}{\Tilde{a}_m(\mathbf{x})} p(\left\{\mathbf{y} \in B_r: \Tilde{a}_m(\mathbf{y}) < \Tilde{a}_m(\mathbf{x})\right\}
    \end{split}
    \end{equation*}
    where 
    \begin{equation*}
        \Tilde{k} = \int_{\mathcal{H}} \Tilde{a}_m(\mathbf{x})d\mathbf{x}>0
    \end{equation*}
    Since $a_m(\cdot)$ is absolutely continuous with respect to Lebesgue measure and $Leb(B_r) > 0$, it follows that the terms cannot be 0, thus  $P_m(\mathbf{x},B) >0$. Hence, the chain is $a_m$-irreducible.
\end{proof}

\begin{definition}
A chain with stationary distribution $\phi$ is called aperiodic if there are no disjoint subsets $\mathcal{H}_1,\cdots,\mathcal{H}_n \subseteq \mathcal{H}$ for $n\geq 2$ with transition probability $P(x,\mathcal{H}_{k+1})=1$ for all $x \in \mathcal{H}_k$ ($1 \leq i \leq n-1$) and $P(x,\mathcal{H}_1)=1$ for all $x \in \mathcal{H}_i$ such that $\phi(\mathcal{H}_1) >0$.
\end{definition}

\begin{lemma}(Roberts and Rosenthal)\cite{couplingconstruct}
The MTMC chain with fixed approximation distribution $a_m$ is aperiodic. 
\end{lemma}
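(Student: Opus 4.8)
The plan is to argue by contradiction, reusing almost verbatim the one-step reachability estimate already established inside the proof of Lemma~\ref{lemma2.10}. Suppose the chain $P_m$, with the approximation distribution frozen at $a_m$, were periodic: then there would exist pairwise disjoint measurable sets $\mathcal{H}_1,\dots,\mathcal{H}_n$ with $n\geq 2$, satisfying $P_m(\mathbf{x},\mathcal{H}_{k+1})=1$ for every $\mathbf{x}\in\mathcal{H}_k$ (indices read cyclically, $\mathcal{H}_{n+1}:=\mathcal{H}_1$) and $a_m(\mathcal{H}_1)>0$. In particular $\mathcal{H}_1\neq\emptyset$, so we may fix a point $\mathbf{x}\in\mathcal{H}_1$; since $\mathcal{H}_1\cap\mathcal{H}_2=\emptyset$ and $P_m(\mathbf{x},\mathcal{H}_2)=1$, this forces $P_m(\mathbf{x},\mathcal{H}_1)\leq P_m(\mathbf{x},\mathcal{H}\setminus\mathcal{H}_2)=0$.

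First I would record the key input already contained in the proof of Lemma~\ref{lemma2.10}: because the proposal density $q$ is continuous and strictly positive and the (possibly unnormalised) approximation density $\tilde{a}_m$ is finite on $\mathcal{H}$, the chain of inequalities used there in fact delivers the \emph{strict one-step} bound $P_m(\mathbf{x},B)>0$ for every $\mathbf{x}\in\mathcal{H}$ and every measurable $B$ with $a_m(B)>0$ — that is, the chain reaches any positive-measure target in a single step, not merely in $n(\mathbf{x},B)$ steps. Applying this with $B=\mathcal{H}_1$ yields $P_m(\mathbf{x},\mathcal{H}_1)>0$, which contradicts the equality $P_m(\mathbf{x},\mathcal{H}_1)=0$ derived above. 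Hence no periodic decomposition can exist and $P_m$ is aperiodic.

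The argument is short because the real work — positivity and continuity of the proposal, and the resulting one-step reachability bound — is already done in Lemma~\ref{lemma2.10}; the main thing to be careful about is the measure-theoretic bookkeeping. Specifically, one should note that $a_m(B)>0$ is equivalent to $B$ having positive Lebesgue measure inside the support of $a_m$ (using absolute continuity of $a_m$), so the estimate from Lemma~\ref{lemma2.10} genuinely applies to $B=\mathcal{H}_1$; and that it suffices to know $a_m(\mathcal{H}_1)>0$, which is part of the definition of periodicity, rather than positivity of every cell. For completeness one gets $a_m(\mathcal{H}_k)>0$ for all $k$ for free from the stationarity of $a_m$ under $P_m$, since $a_m(\mathcal{H}_{k+1})=\int_{\mathcal{H}}P_m(\mathbf{x},\mathcal{H}_{k+1})\,a_m(d\mathbf{x})\geq\int_{\mathcal{H}_k}P_m(\mathbf{x},\mathcal{H}_{k+1})\,a_m(d\mathbf{x})=a_m(\mathcal{H}_k)$, but this is not needed. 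I would also mention that the alternative route — showing the rejection probability is strictly positive on a set of positive $a_m$-measure, so the chain stays put with positive probability and cannot be periodic — works too, but it degenerates precisely when $a_m$ is uniform and $Q$ is symmetric (then $\alpha_2\equiv 1$ and there is no rejection), whereas the reachability argument is insensitive to that case; so I would present the reachability version.

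Finally I would record the payoff that motivates this lemma: combining the stationarity of $a_m$ under $P_m$, $a_m$-irreducibility (Lemma~\ref{lemma2.10}) and aperiodicity, the standard convergence theorem for Markov chains gives that $a_m$ is the \emph{unique} stationary distribution of $P_m$ and that $\lVert P^N_m(\mathbf{x},\cdot)-a_m(\cdot)\rVert\to 0$ as $N\to\infty$ for every $\mathbf{x}$. Together with $\lVert a_m-p\rVert=\delta(m)$ and the triangle inequality, this furnishes exactly the first hypothesis of Theorem~\ref{thm2.6}, namely $\lVert P^N_m(\mathbf{x},\cdot)-p(\cdot)\rVert\leq\epsilon+\delta(m)$ for $N$ large.
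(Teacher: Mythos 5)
Your proof is correct and is essentially the paper's own argument: the paper also argues by contradiction, noting that $P_m(\mathbf{x},\mathcal{H}_2)=1$ with $\mathcal{H}_1\cap\mathcal{H}_2=\emptyset$ forces $P_m(\mathbf{x},\mathcal{H}_1)=0$, while the positive, continuous proposal density gives $P_m(\mathbf{x},\mathcal{H}_1)\geq\int_{\mathcal{H}_1}q(\mathbf{x},\mathbf{y})\min\bigl(1,\tfrac{a_m(\mathbf{y})}{a_m(\mathbf{x})}\tfrac{q(\mathbf{y},\mathbf{x})}{q(\mathbf{x},\mathbf{y})}\bigr)d\mathbf{y}>0$, exactly the one-step reachability estimate you import from the proof of Lemma~\ref{lemma2.10}. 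Your additional remarks (absolute continuity bookkeeping, stationarity giving positive mass to every cell) are fine but not needed beyond what the paper already does.
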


\begin{proof}
Suppose $\mathcal{H}_1$ and $\mathcal{H}_2$ are disjoint subsets of $\mathcal{H}$ with positive $a_m$-measure. Suppose $P(\mathbf{x},\mathcal{H}_2)=1$ for all $\mathbf{x} \in \mathcal{H}_1$.
Take any $\mathbf{x} \in \mathcal{H}_1$, since $\mathcal{H}_1$ have positive Lebesgue measure,
\begin{equation*}
        P(\mathbf{x},\mathcal{H}_1) \geq \int_{\mathbf{y} \in \mathcal{H}_1} q(\mathbf{x},\mathbf{y}) \min \bigg(1, \frac{a_m(\mathbf{y})}{a_m(\mathbf{y})}\frac{q(\mathbf{y},\mathbf{x})}{q(\mathbf{x},\mathbf{y})}\bigg) d\mathbf{y} >0.
\end{equation*}
This is a contradiction.

\end{proof}

\begin{corollary}(Roberts and Rosenthal)\cite{couplingconstruct}
The approximation distribution $a_m$ is the unique stationary distribution of $P_m$ for all $m \in A$.
\end{corollary}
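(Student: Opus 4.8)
The plan is to deduce the corollary from the three preceding lemmas by invoking a standard uniqueness criterion for Markov chains. For a fixed approximation index $m$, the kernel $P_m$ defines a genuine time-homogeneous Markov chain on $\mathcal{H}$ (this is the observation made just before the stationarity lemma). For that chain we have now established three properties: it admits $a_m$ as an invariant distribution (the detailed-balance Lemma), it is $a_m$-irreducible (Lemma \ref{lemma2.10}), and it is aperiodic (the aperiodicity Lemma). The corollary is exactly the assertion that these three facts together force $a_m$ to be the \emph{unique} stationary distribution.

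The key input I would use is the classical fact, in the form employed by Roberts and Rosenthal \cite{couplingconstruct}, that a $\psi$-irreducible Markov chain which possesses an invariant probability measure $\pi$ is automatically positive recurrent, and that a positive recurrent chain has a \emph{unique} invariant probability measure; adding aperiodicity further yields convergence of $P_m^n(\mathbf{x},\cdot)$ to $\pi$ in total variation for $\pi$-almost every starting point $\mathbf{x}$. Here I take $\psi = a_m = \pi$, which is legitimate precisely because Lemma \ref{lemma2.10} gives $a_m$-irreducibility and the detailed-balance Lemma supplies $a_m$ as an invariant probability.

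Putting the pieces together, the proof is essentially one line: by the detailed-balance Lemma the chain $P_m$ has $a_m$ as a stationary distribution; by Lemma \ref{lemma2.10} and the aperiodicity Lemma it is $a_m$-irreducible and aperiodic; hence by the cited theorem $a_m$ is its unique stationary distribution. Since $m \in A$ was arbitrary, this holds for every $m \in A$, which is the claim. In particular this validates the first hypothesis of Theorem \ref{thm2.6}, since uniqueness of the stationary distribution is what guarantees that $P_m^N(\mathbf{x},\cdot)$ approaches $a_m$ (and hence $p$ up to $\delta(m)$) regardless of the starting point.

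I do not expect a substantive obstacle. The only point requiring care is to state the invoked convergence theorem with hypotheses that match exactly what the lemmas deliver — $\psi$-irreducibility with $\psi = a_m$, existence of the invariant probability $a_m$, and aperiodicity — and to be careful to claim only uniqueness of the stationary distribution (equivalently, ergodicity in total variation from $a_m$-a.e.\ start), rather than a quantitative rate such as geometric ergodicity, which would require an additional drift or minorization condition that is not assumed at this level of generality.
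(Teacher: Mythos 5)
Your proposal is correct and matches the paper's own argument: the paper likewise combines the detailed-balance lemma, the $a_m$-irreducibility lemma, and the aperiodicity lemma, and then invokes the standard result (citing Roberts and Rosenthal's coupling-based proof) that an irreducible, aperiodic chain with stationary distribution $a_m$ has $a_m$ as its unique stationary distribution. Your added remarks on positive recurrence and on not claiming a quantitative rate are sensible but do not change the route.
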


The result follows from the fact that if a Markov chain with stationary distribution $a_m$ is $a_m$-irreducible and aperiodic, then $a_m$ is the unique stationary distribution.
The proof uses the coupling argument can be found in Roberts' and Rosenthal's \cite{couplingconstruct}.

\bigskip
\section{Rate of Convergence}
\bigskip

In this section, we put the 'model error' from corresponding approximation method $\delta$ aside and consider the 'random error' $\epsilon$. More specifically, we examine how fast $\epsilon$ descends when $N$ gets larger when fixing the approximation distribution chosen, i.e. we determine $\epsilon$ for $\lVert P^N_m (\mathbf{x}, \cdot) - p(\cdot) \rVert \leq \epsilon + \delta(m)$ fixing $m$. 

$Notation.$ Since we only consider one transition matrix $P_m$ in this section, we lose the index $m$ and denote the transition probability as $P$ only.

Recall that we denote the approximation distribution chosen at step $m$ as $a_m$, the proposal distribution as $Q$.
On the state space $\mathcal{H}$, the transition probability $P(\mathbf{x},\mathbf{y})$ at each step from $\mathbf{x}$ to $\mathbf{y}$ is
\begin{equation}
P(\mathbf{x},\mathbf{y}) = 
\left\{
             \begin{array}{lr}
             Q(\mathbf{x},\mathbf{y}) \min\Big(1,\frac{a_{m}(\mathbf{y})}{a_{m}(\mathbf{x})}\frac{Q(\mathbf{y},\mathbf{x})}{Q(\mathbf{x},\mathbf{y})}\Big), & \mathbf{x} \neq \mathbf{y}  \\ Q(\mathbf{x},\mathbf{x}) + \sum_{\mathbf{z} \in \mathcal{H}, \mathbf{z} \neq \mathbf{x}} Q(\mathbf{x},\mathbf{z})\max \Big(0, 1-\frac{a_{m}(\mathbf{z})}{a_{m}(\mathbf{x})}\frac{Q(\mathbf{z},\mathbf{x})}{Q(\mathbf{x},\mathbf{z})}\Big), &  \mathbf{x} = \mathbf{y}
             \end{array}
\right.
\label{transprob}
\end{equation} 

Assume that the sample space denoted as $\mathcal{H}=\left\{1,...,n\right\}$ is finite and discrete. 
The transition kernel is then an $n \times n$ right stochastic matrix at each iteration, i.e. a real square matrix with each row summing to $1$:
\begin{gather}\nonumber
\begin{split}
\mathcal{P} &= 
\left[
\begin{matrix}
P(1,1)&P(1,2)&\cdots&P(1,n-1)&P(1,n)\\
P(2,1)&P(2,2)&\cdots&P(2,n-1)&P(2,n)\\
\vdots&\vdots&\ddots&\vdots&\vdots\\
P(n-1,1)&P(n-1,2)&\cdots&P(n-1,n-1)&P(n-1,n)\\
P(n,1)&P(n,2)&\cdots&P(n-1,n)&P(n,n)\\
\end{matrix} \right]
\\
\end{split}
\end{gather}

\bigskip
\subsection{Independent Proposal and Eigenvalue Analysis}
\bigskip

In this section, we assume the proposal $Q$ is independent, i.e. $Q(i,j)=Q(j)=:Q_j$ and use eigenvalue analysis to bound the convergence rate. This technique is quite common when analysing the convergence rate of a Markov chain. The proof's structure and the idea of ordering the state space are based on Liu's \cite{Liu}. 

The transition probability at step $n$ is given as
\begin{equation}\nonumber
    P(i,j) = 
    \left\{
    \begin{array}{lr}
             Q_j \min \Big(1,\frac{a_m(j)}{a_m(i)}\frac{Q_i}{Q_j}\Big), & i \neq j, i,j \in \mathcal{H} \\ Q_i + \sum_{d \in \mathcal{H}, d \neq i} Q_d\max \Big(0, 1-\frac{a_m(d)}{a_m(i)}\frac{Q_i}{Q_d}\Big), &  i = j, i,j \in \mathcal{H}
             \end{array}
     \right.
\end{equation}
where we defined \emph{important ratio} \cite{Liu} $w_{k}$ of state $k$ as
\begin{equation}\nonumber
    w_{k} = \frac{a_{m}(k)}{Q_k}, \qquad \forall k\in\mathcal{H}.
\end{equation}

Assume the sample space is given as $\mathcal{H}=\left\{1,...,n\right\}$ which is sorted according to the magnitudes of their importance ratios:
\begin{equation}
\label{imra}
    w_{1} := \frac{a_{m}(1)}{Q_1} \geq \cdots \geq w_{k} := \frac{a_{m}(k)}{Q_k} \geq \cdots \geq w_{m} := \frac{a_{m}(n)}{Q_n}, \qquad k\in\mathcal{H}.
\end{equation}

The transition probability from $i$ to $j$ with $i,j \in \mathcal{H}$ can then be denoted as

\begin{equation}\nonumber
    P(i,j) = 
    \left\{
    \begin{array}{lr}
             Q_j \min \big(1,\frac{w_{j}}{w_{i}}\big), & i \neq j \\ Q_i + \sum_d Q_d \max \big(0, 1-\frac{w_{d}}{w_{i}}\big), &  i = j
             \end{array}.
     \right.
\end{equation}
With the state space $\mathcal{H}$ sorted as \ref{imra}, we can write:
\begin{equation*}
     P(i,j) = 
    \left\{
    \begin{array}{lr}
              \frac{a_m(j)}{w_{i}}, & i < j \\ Q_i + \sum_d Q_d \max \big(0, 1-\frac{w_{d}}{w_{i}}\big), &  i = j\\ Q_j, & j>i
             \end{array}.
     \right.
\end{equation*}

The transition matrix is then
\begin{gather}\nonumber
\mathcal{P} = \left[
\begin{matrix}
Q_1+\lambda_1 & a_m(2)/w_1 & a_m(3)/w_1 & \cdots & a_m(n-1)/w_1  & a_m(n)/w_1\\
Q_1 & Q_2+\lambda_2 & a_m(3)/w_2 & \cdots & a_m(n-1)/w_2 & a_m(n)/w_2\\
\vdots&\vdots&\vdots&\ddots&\vdots&\vdots\\
Q_1 & Q_2 & Q_3 & \vdots &  Q_{n-1}+\lambda_{n-1}  & a_m(n)/w_{n-1} \\
Q_1 & Q_2 & Q_3 & \vdots &  Q_{n-1} & Q_n \\
\end{matrix}\right]
\end{gather}
where
\begin{equation}
\label{eigenvalue}
    \lambda_k = \sum^n_{d=k} \Big(Q_d-\frac{a_m(d)}{w_k}\Big) = \sum^n_{d=k} \Big(\frac{a_m(d)}{w_d}-\frac{a_m(d)}{w_k}\Big) \qquad \forall k\in\mathcal{H}
\end{equation}
is the probability of being rejected in the next step if the chain is currently at state $k$. This makes sense since the probability of staying in the same state as last step is the probability of choosing the same state with the proposal then accept it (here the probability of accepting the same state after it is chosen with the independent proposal is $1$) plus the probability of choosing some other states with the proposal and then being rejected.

If two states $k$ and $k+1$ have equal importance ratios, then $\lambda_k=\lambda_{k+1}$. The transition matrix is then not diagonalisable. We only consider the case where all $\lambda_k$'s are different.

$Remark.$ This is normally the case in reality since it is rare that two states happen to give exactly the same important ratio. We will only consider the case of the transition matrix being diagonalisable. However, the generalisation to non-diagonalisable transition matrices is trivial using generalised eigenvalues. 

We want to compute the eigenvalues and (left) eigenvectors of $\mathcal{P}$. 

Decompose transition matrix $\mathcal{P}$ as $\mathcal{T} + \mathbf{e}\mathbf{Q}^T$ where $\mathbf{e}=(1,...,1)^T$, $\mathbf{Q}=(Q_1,...,Q_m)^T$. $\mathcal{T}$ is an upper triangle matrix
\begin{gather}\nonumber
    \mathcal{T} =\left[
    \begin{matrix}
            \lambda_1 & \frac{Q_2(w_2-w_1)}{w_1} & \frac{Q_3(w_3-w_1)}{w_1} & \cdots & \frac{Q_{n-1}(w_{n-1}-w_1)}{w_1}  & \frac{Q_{n}(w_n-w_1)}{w_1}\\
0&\lambda_2 & \frac{Q_3(w_3-w_2)}{w_2} & \cdots &\frac{Q_{n-1}(w_{n-1}-w_2)}{w_2}&\frac{Q_n(w_n-w_2)}{w_2}\\
\vdots&\vdots&\vdots&\ddots&\vdots&\vdots\\
0&0&0& \vdots &\lambda_{n-1} & \frac{Q_n(w_n-w_{n-1})}{w_{n-1}}\\
0&0&0& \vdots &0& 0 \\
    \end{matrix}\right].
\end{gather}
So the eigenvalues of $\mathcal{T}$ are $1 \geq \lambda_1\geq \lambda_2\geq ...\geq  \lambda_{n-1}$.

$Remark.$ One computes left eigenvectors here since the rows of the transition matrix sum up to $1$. The transition matrix shall be applied on the right hand side of a probability (row)vector. Thus the eigenvector corresponding to eigenvalue $\lambda_0 = 1$ is the stationary distribution $\mathbf{v}_0 = (a_m(1),\cdots, a_m(n))$.

\begin{lemma}
The eigenvalues and (left) eigenvectors of $\mathcal{T}$ are
$\lambda_k$ and 
\begin{equation*}
    \mathbf{v}_k=(0,\cdots,0,-\sum^n_{d=k+1}a_m(d),a_m(k+1),\cdots,a_m(n))
\end{equation*} 
with $k-1$ zero entries, for $k=1,...,n-1$.
\end{lemma}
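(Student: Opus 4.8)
The plan is to verify directly that each proposed pair $(\lambda_k, \mathbf{v}_k)$ satisfies the left-eigenvector equation $\mathbf{v}_k \mathcal{T} = \lambda_k \mathbf{v}_k$, exploiting the upper-triangular structure of $\mathcal{T}$. Since $\mathbf{v}_k$ has its first $k-1$ entries equal to zero, the product $\mathbf{v}_k \mathcal{T}$ only involves rows $k, k+1, \dots, n$ of $\mathcal{T}$, and because $\mathcal{T}$ is upper-triangular with diagonal entries $\lambda_1, \dots, \lambda_{n-1}, 0$, the $j$-th component of $\mathbf{v}_k \mathcal{T}$ vanishes automatically for $j < k$. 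So it suffices to check components $j = k, k+1, \dots, n$.

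First I would handle the diagonal component $j = k$: here only the $k$-th entry of $\mathbf{v}_k$, namely $-\sum_{d=k+1}^n a_m(d)$, multiplies $\mathcal{T}_{kk} = \lambda_k$, giving $-\lambda_k \sum_{d=k+1}^n a_m(d) = \lambda_k (\mathbf{v}_k)_k$, as required. Next, for each $j$ with $k+1 \leq j \leq n$, the $j$-th component of $\mathbf{v}_k \mathcal{T}$ is
\begin{equation*}
\Big(-\sum_{d=k+1}^n a_m(d)\Big)\frac{Q_j(w_j - w_k)}{w_k} + \sum_{i=k+1}^{j-1} a_m(i)\,\frac{Q_j(w_j - w_i)}{w_i} + a_m(j)\,\lambda_j,
\end{equation*}
and I must show this equals $\lambda_k\, a_m(j)$. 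Using $a_m(i) = w_i Q_i$ to simplify $a_m(i)\frac{Q_j(w_j-w_i)}{w_i} = Q_i Q_j (w_j - w_i)$, and substituting the closed form $\lambda_j = \sum_{d=j}^n\big(\frac{a_m(d)}{w_d} - \frac{a_m(d)}{w_j}\big) = \sum_{d=j}^n Q_d\big(1 - \frac{w_d}{w_j}\big)$ from \eqref{eigenvalue}, the identity should reduce to an algebraic relation among the $w$'s and $Q$'s that telescopes. The main work is bookkeeping: collecting the $\frac{1}{w_k}$ terms against the definition $\lambda_k = \sum_{d=k}^n Q_d(1 - \frac{w_d}{w_k})$ and checking the remaining terms cancel.

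An alternative and perhaps cleaner route is to use the already-established fact (the stationarity lemma and its uniqueness corollary) that $\mathbf{v}_0 = (a_m(1), \dots, a_m(n))$ is the left eigenvector of $\mathcal{P} = \mathcal{T} + \mathbf{e}\mathbf{Q}^T$ for eigenvalue $1$, together with the observation that $\mathbf{v}_k \mathbf{e} = -\sum_{d=k+1}^n a_m(d) + \sum_{d=k+1}^n a_m(d) = 0$, so that $\mathbf{v}_k \mathcal{P} = \mathbf{v}_k \mathcal{T} + (\mathbf{v}_k\mathbf{e})\mathbf{Q}^T = \mathbf{v}_k\mathcal{T}$. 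Hence eigenvectors of $\mathcal{T}$ with eigenvalue $\neq 1$ that are orthogonal to $\mathbf{e}$ are exactly eigenvectors of $\mathcal{P}$, which is the real object of interest; I would flag this so the next lemma can quote it. I expect the main obstacle to be purely the combinatorial index manipulation in the $j > k$ case — making sure the partial sum $\sum_{i=k+1}^{j-1}$ from the strictly-upper-triangular part, the boundary term from entry $k$, and the diagonal term $a_m(j)\lambda_j$ combine to give precisely $\lambda_k a_m(j)$ with no leftover — rather than any conceptual difficulty. A sanity check on small $n$ (say $n = 3$) and on the extreme cases $j = k+1$ and $j = n$ would confirm the algebra before writing it out in full generality.
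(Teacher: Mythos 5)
Your plan is correct and is essentially the paper's own proof: a direct componentwise check of $\mathbf{v}_k\mathcal{T}=\lambda_k\mathbf{v}_k$, with the $j<k$ entries vanishing by triangularity, the $j=k$ entry immediate, and the $j>k$ entries reduced via $a_m(i)=w_iQ_i$ and the closed form of $\lambda_j$ --- the ``bookkeeping'' you defer is precisely the chain of cancellations the paper writes out, and it does close to $\lambda_k a_m(j)$. Your side observation that $\mathbf{v}_k\mathbf{e}=0$, hence $\mathbf{v}_k\mathcal{P}=\mathbf{v}_k\mathcal{T}$, is exactly how the paper's subsequent theorem transfers these eigenpairs to $\mathcal{P}$.
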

\begin{proof}
For all $l<k$, the $l$th element of $\mathbf{v}_k\mathcal{T}$ is zero.
For $l=k$, the $l$th element of $\mathbf{v}_k\mathcal{T}$ is $-\sum^n_{d=k+1}a_m(d)\lambda_k$.
For $l>k$, 

\begin{equation*}
\begin{split}
    (\mathbf{v}_k\mathcal{T})_l 
    =& \Big(-\sum^n_{d=k+1}a_m(d)\Big)\frac{Q_{l}(w_l-w_k)}{w_k} + a_m(k+1)\frac{Q_{l}(w_l-w_{k+1})}{w_{k+1}} + \cdots \\
    &+a_m(l-1)\frac{Q_{l}(w_l-w_{l-1})}{w_{l-1}} + a_m(l)\lambda_l \\
    =& \Big(-\sum^n_{d=k+1}a_m(d)\Big)\Big(\frac{Q_k a_m(l)}{a_m(k)}-Q_l\Big) + a_m(k+1) \Big(\frac{Q_{k+1} a_m(l)}{a_m(k+1)}-Q_l\Big)+\cdots\\
    &+a_m(l-1)\Big(\frac{Q_{l-1} a_m(l)}{a_m(l-1)}-Q_l\Big)+a_m(l)\sum^n_{d=l} \Big(Q_d-\frac{a_m(d)}{w_l}\Big) \\
    =&-\frac{Q_k a_m(l)}{a_m(k)}\sum^n_{d=k+1}a_m(d)+Q_l \sum^n_{d=k+1}a_m(d) + Q_{k+1} a_m(l) - Q_l a_m(k+1) + \cdots \\
    &+Q_{l-1} a_m(l) - Q_l a_m(l-1) + a_m(l)\sum^n_{d=l}Q_d - a_m(l) \frac{Q_l}{a_{m}(l)}\sum^n_{d=l}a_{m}(d)\\
    =&-\frac{Q_k a_m(l)}{a_m(k)}\sum^n_{d=k+1}a_m(d)+Q_l \sum^n_{d=k+1}a_m(d)+a_m(l)\sum^{l-1}_{d=k+1}Q_d-Q_l\sum^{l-1}_{d=k+1}a_m(d)+\\
    &a_m(l)\sum^n_{d=l}Q_d-Q_l\sum^n_{d=l}a_{m}(d)\\
    =&-\frac{Q_k a_m(l)}{a_m(k)}\sum^n_{d=k+1}a_m(d)+Q_l \sum^n_{d=k+1}a_m(d)+a_m(l)\sum^n_{d=k+1}Q_d-Q_l\sum^n_{d=k+1}a_m(d)\\ 
    =&-\frac{Q_k a_n(l)}{a_n(k)}\sum^n_{d=k+1}a_m(d)+a_m(l)\sum^n_{d=k+1}Q_d\\
    =&-\frac{Q_k a_m(l)}{a_m(k)}\sum^n_{d=k+1}a_m(d)-\frac{Q_k a_m(l)}{a_m(k)}a_m(k)+a_m(l) Q_k + a_m(l)\sum^n_{d=k+1}Q_d\\
    =&a_m(l)(\sum^n_{d=k} (Q_d-\frac{Q_k}{a_m(k)}a_m(d)))\\
    =&a_m(l)\lambda_k
\end{split}
\end{equation*}
Thus
\begin{equation}\nonumber
    \mathbf{v}_k\mathcal{T} = \lambda_k \mathbf{v}_k
\end{equation}
\end{proof}

\begin{theorem}
The eigenvalues of the transition matrix $\mathcal{P}$ are $1 \geq \lambda_1\geq \lambda_2\geq ...\geq  \lambda_{n-1}\geq 0$ where $\lambda_k=\sum^n_{d=k} (Q_d-\frac{a_m(d)}{w_k})$. 

The corresponding eigenvectors are $\mathbf{v}_k=(0,\cdots,0,-\sum_{d=k+1}^n a_m(d),a_m(k+1),\cdots,a_m(n))$ with $k-1$ zero entries.
\end{theorem}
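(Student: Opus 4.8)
The plan is to leverage the decomposition $\mathcal{P} = \mathcal{T} + \mathbf{e}\mathbf{Q}^T$ established above, together with the preceding lemma which already identifies the eigenpairs $(\lambda_k, \mathbf{v}_k)$ of the triangular part $\mathcal{T}$ for $k = 1, \dots, n-1$. The key observation to exploit is that each eigenvector $\mathbf{v}_k = (0,\dots,0,-\sum_{d=k+1}^n a_m(d), a_m(k+1), \dots, a_m(n))$ has entries summing to zero, i.e.\ $\mathbf{v}_k \mathbf{e} = 0$. Consequently, when we apply the full transition matrix $\mathcal{P}$ on the left, the rank-one perturbation contributes nothing: $\mathbf{v}_k \mathcal{P} = \mathbf{v}_k \mathcal{T} + (\mathbf{v}_k \mathbf{e})\mathbf{Q}^T = \lambda_k \mathbf{v}_k + 0 = \lambda_k \mathbf{v}_k$. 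So the same $\mathbf{v}_k$ are left eigenvectors of $\mathcal{P}$ with the same eigenvalues $\lambda_k$, which covers $n-1$ of the $n$ eigenvalues.

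For the remaining eigenvalue, I would invoke the earlier Remark: since $\mathcal{P}$ is a stochastic matrix whose rows sum to $1$, and we verified in the lemma that $a_m$ is a stationary distribution, the row vector $\mathbf{v}_0 = (a_m(1), \dots, a_m(n))$ satisfies $\mathbf{v}_0 \mathcal{P} = \mathbf{v}_0$, giving eigenvalue $\lambda_0 = 1$. To see that $\{\mathbf{v}_0, \mathbf{v}_1, \dots, \mathbf{v}_{n-1}\}$ is a full eigenbasis (so these are all the eigenvalues, with multiplicity), I would note that the $\mathbf{v}_k$ for $k \geq 1$ are visibly linearly independent — each $\mathbf{v}_k$ has its first nonzero entry in position $k$, so they form a triangular system — and $\mathbf{v}_0$ has all positive entries hence cannot be a combination of the zero-sum vectors $\mathbf{v}_1,\dots,\mathbf{v}_{n-1}$. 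Thus we have $n$ linearly independent eigenvectors and the list of eigenvalues is complete.

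Finally I would confirm the stated ordering and sign bounds: $\lambda_0 = 1 \geq \lambda_1 \geq \cdots \geq \lambda_{n-1}$ follows from formula \eqref{eigenvalue}, $\lambda_k = \sum_{d=k}^n a_m(d)\big(\tfrac{1}{w_d} - \tfrac{1}{w_k}\big) = \sum_{d=k}^n Q_d\big(1 - \tfrac{w_d}{w_k}\big)$, because the state space was sorted so that $w_k \geq w_d$ for $d \geq k$, making each summand nonnegative (hence $\lambda_k \geq 0$), and passing from $\lambda_k$ to $\lambda_{k+1}$ drops a nonnegative term while replacing $w_k$ by the smaller $w_{k+1}$ in the denominators, which only decreases the value; the bound $\lambda_1 \leq 1$ is immediate since $\lambda_1$ is a rejection probability. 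The only mild subtlety — and the one place to be careful — is the completeness/diagonalisability argument: it relies on the standing assumption that the $\lambda_k$ are distinct (equivalently, the $w_k$ are strictly ordered), which is exactly the hypothesis flagged in the Remark; without it one would pass to generalised eigenvectors, but under the assumption the triangular structure of the $\mathbf{v}_k$ makes independence transparent and there is no real obstacle.
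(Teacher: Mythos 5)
Your proposal is correct and follows essentially the same route as the paper: it uses the decomposition $\mathcal{P}=\mathcal{T}+\mathbf{e}\mathbf{Q}^T$, the preceding lemma for the eigenpairs of $\mathcal{T}$, and the observation that the entries of each $\mathbf{v}_k$ sum to zero so that $\mathbf{v}_k\mathbf{e}\mathbf{Q}^T=0$, hence $\mathbf{v}_k\mathcal{P}=\lambda_k\mathbf{v}_k$. Your additional verifications (the eigenvalue $\lambda_0=1$ with eigenvector $a_m$, linear independence of the eigenbasis under the distinct-$\lambda_k$ assumption, and the ordering $1\geq\lambda_1\geq\cdots\geq\lambda_{n-1}\geq 0$ from the sorted importance ratios) are correct details that the paper states but leaves implicit.
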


\begin{proof}
Since $\mathcal{P}=\mathcal{T} + \mathbf{e}\mathbf{Q}^T$,
\begin{equation}\nonumber
\mathbf{v}_k (\mathcal{T} + \mathbf{e}\mathbf{Q}^T)= \lambda_k \mathbf{v}_k+\mathbf{v}_k(\mathbf{e}\mathbf{Q}^T)
\end{equation}
Also,
\begin{equation*}
\begin{split}
    \mathbf{v}_k \mathbf{e}\mathbf{Q}^T=&(0,\cdots,0,-\sum_{d=k+1}^n a_m(d),a_m(k+1),\cdots,a_m(n)) \cdot\\
    &(1,\cdots,1)^T\cdot(Q(1),\cdots,Q(n))=0
\end{split}
\end{equation*}

\end{proof}

Now, we use the eigenvalues and the eigenvectors to bound the total variance between the distribution at $N$-th step and $a_m$.

\begin{theorem}
Assume the transition matrix has fixed approximation $a_m$. The total variance between the distribution at $N$-th step and $a_m$ is
\begin{equation*}
    \lVert p_N - a_m \rVert \leq \Big( \sum_{k=1}^{n-1} \lvert \theta_k \mathbf{v}_k \rvert \Big) (\lambda_1)^N
\end{equation*}
where $\mathbf{v}_0,\cdots, \mathbf{v}_{n-1}$ are a basis of eigenvectors corresponding to the eigenvalues $\lambda_0=1, \cdots, \lambda_{n-1}$ of the transition matrix $\mathcal{P}$ respectively and $\theta_0, \cdots, \theta_{n-1}$ denotes the renormalising constants:
\begin{equation*}
    p_0 = \theta_0 \mathbf{v}_0 + \cdots + \theta_{n-1} \mathbf{v}_{n-1}.
\end{equation*}
where $p_0$ is the initial distribution.
\end{theorem}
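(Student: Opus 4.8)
The plan is to expand the initial distribution $p_0$ in the eigenbasis $\{\mathbf{v}_0,\dots,\mathbf{v}_{n-1}\}$ of $\mathcal{P}$, push it forward $N$ steps, and control each non-stationary component by its eigenvalue. Concretely, write $p_0 = \theta_0 \mathbf{v}_0 + \sum_{k=1}^{n-1} \theta_k \mathbf{v}_k$. Since the rows of $\mathcal{P}$ sum to $1$ and each eigenvector $\mathbf{v}_k$ for $k\geq 1$ has entries summing to zero (as $-\sum_{d=k+1}^n a_m(d) + \sum_{d=k+1}^n a_m(d)=0$), while $\mathbf{v}_0$ has entries summing to $1$ up to normalisation, I first check that $\theta_0$ is forced to equal the constant making $\theta_0\mathbf{v}_0 = a_m$ — this uses that $p_0$ is a probability vector and that applying $\mathcal{P}$ preserves total mass, so the $\mathbf{v}_0$-component is invariant and must be the stationary distribution $a_m$ itself.

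Next I would compute $p_N = p_0 \mathcal{P}^N$ by linearity: since $\mathbf{v}_k \mathcal{P} = \lambda_k \mathbf{v}_k$, we get $p_N = \theta_0 \mathbf{v}_0 + \sum_{k=1}^{n-1}\theta_k \lambda_k^N \mathbf{v}_k = a_m + \sum_{k=1}^{n-1}\theta_k \lambda_k^N \mathbf{v}_k$. Then $p_N - a_m = \sum_{k=1}^{n-1}\theta_k \lambda_k^N \mathbf{v}_k$. To pass to total variation, I would use that for signed measures summing to zero on a finite space, $\lVert \mu \rVert = \frac12 \sum_i |\mu_i| \leq \sum_i |\mu_i|$ (or directly bound $\sup_B|\mu(B)|$ by the $\ell^1$ norm of the vector), giving $\lVert p_N - a_m\rVert \leq \sum_{k=1}^{n-1} |\lambda_k|^N \, \lvert \theta_k \mathbf{v}_k\rvert$ where $\lvert \theta_k \mathbf{v}_k\rvert$ denotes the appropriate norm of that vector. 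Finally, since $1 \geq \lambda_1 \geq \lambda_2 \geq \dots \geq \lambda_{n-1}\geq 0$ by the preceding theorem, each $\lambda_k^N \leq \lambda_1^N$, so I can factor out $(\lambda_1)^N$ and arrive at $\lVert p_N - a_m\rVert \leq \big(\sum_{k=1}^{n-1}\lvert \theta_k\mathbf{v}_k\rvert\big)(\lambda_1)^N$.

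The main obstacle I anticipate is bookkeeping around normalisation and the precise meaning of $\lvert \theta_k \mathbf{v}_k \rvert$: the eigenvectors $\mathbf{v}_k$ as written are not unit vectors and $a_m$ may be unnormalised, so I would need to fix a convention (say, the $\ell^1$ norm of the coordinate vector) and make sure the inequality relating total variation distance to that norm is stated cleanly, including the factor of $\tfrac12$ which is harmlessly absorbed. A secondary subtlety is confirming that $\{\mathbf{v}_0,\dots,\mathbf{v}_{n-1}\}$ is genuinely a basis — this follows from the eigenvalues being distinct, which is exactly the standing assumption that all $\lambda_k$ differ, so diagonalisability is guaranteed and the expansion of $p_0$ is unique. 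Everything else is a routine triangle-inequality estimate once the forward iteration formula $p_N = a_m + \sum_{k\geq 1}\theta_k\lambda_k^N\mathbf{v}_k$ is in hand.
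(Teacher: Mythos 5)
Your proposal is correct and follows essentially the same route as the paper: expand $p_0$ in the left-eigenvector basis, apply $\mathcal{P}^N$ termwise to get $p_N = \theta_0\mathbf{v}_0 + \sum_{k\geq 1}\theta_k\lambda_k^N\mathbf{v}_k$, identify $\theta_0\mathbf{v}_0$ with $a_m$, and bound the remainder by the triangle inequality and $\lambda_k^N \leq \lambda_1^N$. Your justification that $\theta_0\mathbf{v}_0 = a_m$ via mass conservation (zero-sum eigenvectors for $k\geq 1$) and your explicit $\ell^1$ control of the total variation are in fact slightly more careful than the paper's corresponding steps, but the argument is the same.
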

\begin{proof}
Since we have the eigenvalues and eigenvector of the trnasition matrix $P$ with $\mathbf{v}_k P = \lambda_k \mathbf{v}_k$ for all $0 \leq k \leq m-1$, we have
\begin{equation*}
\begin{split}
    p_N &= p_0 \mathcal{P}^N\\
    &= \theta_0 \mathbf{v}_0 \mathcal{P}^N+ \cdots + \theta_{n-1} \mathbf{v}_{n-1}\mathcal{P}^N \\
    &= \theta_0 (\lambda_0)^N \mathbf{v}_0 + \cdots + \theta_{n-1}(\lambda_{n-1})^N \mathbf{v}_{n-1}
    \end{split}
\end{equation*}
All the eigenvalues except for $\lambda_0$ is smaller than $1$ and hence all terms except the first vanish when $N \rightarrow \infty$. Thus $p_N \rightarrow \theta_0 \mathbf{v}_0$ for $N \rightarrow \infty$. It is clear that $a_m=\theta_0 \mathbf{v}_0$ is the target distribution up to a constant $\theta_0 = \sum_{i \in \mathcal{H}}\mathbf{v}_0(i)$ which is independent of the initial distribution $p_0$.

Thus 
\begin{equation*}
    p_N(i)-a_m(i) = \theta_1 (\lambda_1)^N \mathbf{v}_1(i)+ \cdots + \theta_{n-1} (\lambda_{n-1})^N \mathbf{v}_{n-1}(i).
\end{equation*}
Then we have
\begin{equation*}
    \lvert p_N(i)-a_m(i) \rvert \leq \Big( \sum_{k=1}^{n-1} \lvert \theta_k \mathbf{v}_k (i) \rvert \lvert \lambda_k\rvert^N \Big) 
\end{equation*}
by the triangle inequality.
Since $\lambda_1$ is the largest eigenvalue smaller than $1$ and total variance take supremum over all $i$, the claim follows.
\end{proof}

$Remark.$ The theorem gives a explicit relation on $N$ and $\epsilon$ for independent proposal. The $\epsilon$ in \ref{thm2.6} is proportional to $(\lambda_1)^N$. 

\begin{proposition}
The bound in Theorem \ref{thm2.6} is $\max(\mathcal{O}((\lambda_1)^N),\mathcal{O}(\delta(N))$ for independent proposal distribution.

\end{proposition}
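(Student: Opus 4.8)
The plan is to combine the spectral bound just established with the triangle inequality and the definition of the model error $\delta$. Recall that the preceding theorem gives, for the fixed transition kernel $P=P_m$ with stationary distribution $a_m$,
\begin{equation*}
    \lVert p_N - a_m \rVert \leq C\,(\lambda_1)^N, \qquad C := \sum_{k=1}^{n-1} \lvert \theta_k \mathbf{v}_k \rvert ,
\end{equation*}
where $C$ depends only on the initial distribution $p_0$ through the expansion coefficients $\theta_k$ and on the eigenvectors $\mathbf{v}_k$, and is in particular independent of $N$. This is precisely the statement that the \emph{random error} appearing in Theorem \ref{thm2.6} satisfies $\epsilon = \epsilon(N) = \mathcal{O}((\lambda_1)^N)$, since $\lambda_1 < 1$.

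Next I would invoke the triangle inequality for the total variation distance together with $\delta(m) = \lVert a_m(\cdot)-p(\cdot)\rVert$ from the hypotheses of Theorem \ref{thm2.6}:
\begin{equation*}
    \lVert p_N - p \rVert \leq \lVert p_N - a_m \rVert + \lVert a_m - p \rVert \leq C\,(\lambda_1)^N + \delta(m).
\end{equation*}
In the MTMC chain the approximation index advances with the iteration count, so evaluating this bound at step $N$ amounts to reading the index $m$ as the number of accepted points up to step $N$; since that count is nondecreasing in $N$ and $\delta$ is monotone along the chain, one may record the second term as $\delta(N)$ in the statement. Finally, applying the elementary inequality $a+b \leq 2\max(a,b)$ for nonnegative $a,b$, I would conclude
\begin{equation*}
    TV(\mathbf{x}, a_0, N) \leq C\,(\lambda_1)^N + \delta(N) \leq 2\max\big( C(\lambda_1)^N, \delta(N) \big) = \max\big( \mathcal{O}((\lambda_1)^N), \mathcal{O}(\delta(N)) \big),
\end{equation*}
which is the asserted rate.

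The routine part is the triangle inequality and the $\max$ estimate; the genuine subtlety I expect to have to address is the passage from the fixed-kernel eigenvalue analysis to the non-Markovian MTMC chain. The spectral bound was derived for a single stochastic matrix $P_m$, whereas the MTMC kernel changes with the iteration; one must therefore either argue that $\lambda_1$ may be taken as a uniform-in-$m$ bound $\sup_m \lambda_1(m) < 1$ on the second largest eigenvalue over the relevant family of kernels — and likewise that the constant $C$ is bounded uniformly in $m$ — or restrict the statement to the regime described at the start of this section, in which $m$ is held fixed and only the $N$-dependence of the random error is tracked (this being the reading consistent with the section's opening remarks). I would make this uniformity assumption explicit rather than leave it implicit, since without it the first term in the bound need not decay geometrically at a single rate.
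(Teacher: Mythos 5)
Your argument is essentially the paper's own (the paper in fact states this proposition without a formal proof, relying on the preceding remark that $\epsilon$ is proportional to $(\lambda_1)^N$ together with the $\epsilon+\delta(m)$ decomposition of Theorem \ref{thm2.6}): combine the spectral bound $\lVert p_N - a_m\rVert \leq C(\lambda_1)^N$ with the triangle inequality and $\delta(m)=\lVert a_m-p\rVert$, then absorb constants into the $\max$ of the two orders. Your closing caveat — that the eigenvalue analysis holds for a fixed kernel $P_m$, so the statement should either be read in the fixed-$m$ regime announced at the start of Section 5 or be supplemented by uniform-in-$m$ control of $\lambda_1(m)$ and $C$ — is correct and is actually more careful than the paper, which silently writes $\delta(N)$; just note that your aside about $\delta$ being monotone along the chain is neither assumed in the paper (only $\delta(m)\rightarrow 0$ is) nor needed for the conclusion.
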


\bigskip
\subsection{General Case and Coupling Argument}
\bigskip

Now, we introduce the \emph{minorisation condition} for the MTMC chain so that we can bound the convergence rate using the coupling inequality. Again, we fix the index of approximation distribution to bound $\epsilon$.
\begin{definition}
A Markov chain with minorisation condition satisfies an inequality of the form
\begin{equation*}
    P^{N_0}(x,B) \geq \epsilon \gamma (B), \qquad \forall x \in R, \quad \forall t, \quad \forall B \subseteq \sigma(\mathcal{H})
\end{equation*}
where $N_0$ is a positive integer, $R$ is a subset of the state space $\mathcal{H}$, $\epsilon >0$ and $\gamma(\cdot)$ is some probability distribution on $(\mathcal{H}, \sigma(\mathcal{H}))$.
\end{definition}

We prove the general case where $R$ is not the whole state space. The proof follows Rosenthal's \cite{min}.

\begin{theorem}(Rosenthal)
Suppose that an MTMC chain with approximation distribution $a_m$ fixed satisfies the minorisation condition as defined above. Let $(X_k)$ and $(Y_k)$ be two realisations of the MTMC chain with the same transition probability at each step but different initial distribution as described in this section earlier. Let 
\begin{equation*}
    t_1 = \inf \left\{k: (X_k, Y_k) \in R \times R\right\}.
\end{equation*}
and for $i>1$:
\begin{equation*}
    t_i = \inf \left\{k: k \geq t_{i-1} + N_0, (X_k, Y_k) \in R \times R\right\}.
\end{equation*}
Set $z_N = \max \left\{i:t_i<N\right\}$. Then for any $j>0$,
\begin{equation*}
    \rVert \mathcal{L}(X_k) - a_m(\cdot) \rVert = \rVert \mathcal{L}(X_k) - \mathcal{L}(Y_k) \rVert \leq (1-\epsilon)^{\lceil \frac{j}{N_0}\rceil} + Prob(z_N <j).
\end{equation*}
\end{theorem}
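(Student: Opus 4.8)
The plan is to run the standard coupling-plus-minorisation argument of Rosenthal, adapted to the fact that the two realisations $(X_k)$ and $(Y_k)$ share the transition kernel $P=P_m$ at every step. First I would set up the coupling construction: whenever the pair $(X_k,Y_k)$ enters $R\times R$ at a regeneration time $t_i$, I apply the minorisation inequality $P^{N_0}(x,B)\ge\epsilon\gamma(B)$ to split the next $N_0$-step transition of each chain into a mixture — with probability $\epsilon$ both chains jump to a common point drawn from $\gamma$, and with probability $1-\epsilon$ they move according to the residual kernel $(P^{N_0}(x,\cdot)-\epsilon\gamma(\cdot))/(1-\epsilon)$, coupled so that they agree if the common-jump event fires for both. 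Between regeneration times, or when the pair is outside $R\times R$, the two chains are run with the same kernel but otherwise independently (or via any valid coupling). Once the chains have met, they are run identically forever after, so the coupling time $T$ satisfies $T\le t_{I}+N_0$ where $I$ is the index of the first successful common jump.

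Next I would bound $Prob(X_k\neq Y_k)$. Condition on the regeneration times $t_1,t_2,\dots$: at each $t_i$ the common-jump event succeeds independently with probability $\epsilon$, so after $j$ regeneration times the probability that none of the first $j$ attempts has coupled the chains is at most $(1-\epsilon)^j$. Because each attempt occupies a block of length $N_0$, having $j$ attempts within the first $N$ steps corresponds to $z_N\ge j$; more precisely, on the event $\{z_N\ge j\}$ there have been at least $j$ regeneration attempts by time $N$, hence
\begin{equation*}
Prob(X_N\neq Y_N)\le Prob\big(X_N\neq Y_N,\ z_N\ge j\big)+Prob(z_N<j)\le (1-\epsilon)^{j}+Prob(z_N<j).
\end{equation*}
Replacing $j$ by $\lceil j/N_0\rceil$ where needed (to account for the $N_0$-step block length in the minorisation) gives the stated $(1-\epsilon)^{\lceil j/N_0\rceil}$ factor. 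Finally, since the $(Y_k)$ chain is started in the stationary distribution $a_m$ — which is indeed stationary for $P_m$ by the detailed-balance Lemma proved earlier — we have $\mathcal{L}(Y_k)=a_m$ for all $k$, so $\lVert\mathcal{L}(X_k)-a_m\rVert=\lVert\mathcal{L}(X_k)-\mathcal{L}(Y_k)\rVert$, and the Coupling Inequality (Theorem \ref{thm2.2}) turns the bound on $Prob(X_N\neq Y_N)$ into the claimed bound on the total variation distance.

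The main obstacle is the bookkeeping around the regeneration times $t_i$ and the quantity $z_N$: one must check carefully that the "successful common jump" events at distinct $t_i$ are genuinely independent given the past (this is where the residual-kernel construction and the fact that $t_{i}\ge t_{i-1}+N_0$ are used, so the $N_0$-step blocks do not overlap), and that the chains having failed to couple by step $N$ forces $z_N$ to be at least the number of attempts made, so that $\{X_N\neq Y_N\}\subseteq\{\text{fewer than }j\text{ successes}\}\cup\{z_N<j\}$. Once the independence of the coupling attempts is nailed down, the geometric bound $(1-\epsilon)^{\lceil j/N_0\rceil}$ is immediate and the rest is the routine coupling inequality. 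A minor point to handle cleanly is that the minorisation only holds on $R$, so between visits to $R\times R$ the chains may drift apart; this is exactly why the bound carries the extra term $Prob(z_N<j)$, which quantifies the chance that $R\times R$ is not visited often enough, and no attempt is made here to bound that term further — it is left in terms of the chain's return behaviour to $R\times R$.
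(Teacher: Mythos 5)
Your proposal is correct and follows essentially the same route as the paper: the minorisation-based splitting of the $N_0$-step kernel into the $\epsilon\gamma$ common-jump component plus the residual kernel, independent evolution off $R\times R$, the geometric bound $(1-\epsilon)^j$ on failing to couple over $j$ regeneration attempts counted by $z_N$, and the coupling inequality together with stationarity of $a_m$ for the $Y$-chain. The only cosmetic difference is that the paper reduces to $N_0=1$ at the outset rather than carrying the $N_0$-blocks explicitly, which is immaterial since $(1-\epsilon)^{j}\le(1-\epsilon)^{\lceil j/N_0\rceil}$.
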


\begin{proof}
Without loss of generality, we take $N_0 = 1$ since the variation distance to a stationary distribution is decreasing with $N$ getting larger. We construct the chains $(X_k)$ and $(Y_k)$ as follows:
\begin{enumerate}
    \item Set $X_0$ and $Y_0$ as the initial distribution $p_0$ and the target distribution $a_m$.
    \item For each step $k$, if $X_k$ and $Y_k$ are both in $R$:
    \begin{enumerate}
        \item \label{1a} With probability $\epsilon$, we set $X_{k+1}=Y_{k+1}=\mathbf{x}$ with $\mathbf{x}$ some point in the state space according to distribution $\gamma$;
        \item With probability $1-\epsilon$, we choose $X_{k+1}$ and $Y_{k+1}$ independently according to the distribution $\frac{1}{1-\epsilon}(P_m(X_k,\cdot)- \epsilon \gamma(\cdot))$ and $\frac{1}{1-\epsilon}(P_m(Y_k,\cdot)- \epsilon \gamma(\cdot))$, respectively. 
    \end{enumerate}
    \item If $X_k \notin R$ or $Y_k \notin R$, we choose $X_{k+1}$ and $Y_{k+1}$ independently according to $P_m(X_k, \cdot)$ and $P_m(Y_k, \cdot)$ respectively.
\end{enumerate}
It is straightforward that the chains proceed with the transition probability $P_m(x,\cdot)$. Furthermore, define $T$ as the first time $(X_k)$ and $(Y_k)$ are coupled, that is, the first time that the situation \ref{1a} happens. Now the coupling inequality shows
\begin{equation*}
    \rVert \mathcal{L}(X_k) - a_m(\cdot) \rVert \leq Prob(X_k \neq Y_k) \leq Prob(T >k).
\end{equation*}
Conditional on $X_k$ and $Y_k$ both remaining in $R$, the coupling time $T$ will be a geometric random variable with parameter $\epsilon$. 
Since 
\begin{equation*}
    z_N = \# \left\{ k^{'}<N: (X_{k^{'}},Y_{k^{'}}) \in R \times R \right\},
\end{equation*}
we have
\begin{equation*}
    Prob(T > N, z_N\geq j) \leq (1-\epsilon)^j.
\end{equation*}
Thus
\begin{equation*}
    Prob(T > N) \leq (1- \epsilon)^j + Prob(z_N < j).
\end{equation*}

\end{proof}

In the case where we have the \emph{Doeblin condition}, i.e. $R=\mathcal{H}$, the minorisation condition is vaid for all the $\mathbf{x}$ in the state space. Then $Prob(z_N<N)=0$ and we obtain the following proposition since the second term of the right hand side vanishes.

\begin{proposition}
Suppose that an MTMC chain satisfies the minorisation condition on the entire state space, then
\begin{equation*}
    \rVert P_m^N - a_m(\cdot) \rVert \leq (1-\epsilon)^{\lceil \frac{N}{N_0}\rceil}.
\end{equation*}
\end{proposition}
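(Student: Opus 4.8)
The plan is to obtain this proposition as the immediate specialisation of the preceding theorem of Rosenthal to the case $R=\mathcal{H}$. First I would note that the Doeblin condition is precisely the minorisation condition with $R=\mathcal{H}$, so all hypotheses of that theorem are in force with the same $N_0$, $\epsilon$ and $\gamma$. The effect of taking $R=\mathcal{H}$ is that now \emph{every} pair $(X_k,Y_k)$ lies in $R\times R$ trivially, so the random return times $t_i$ defined there degenerate to the deterministic values $t_1=0$ and $t_i=t_{i-1}+N_0=(i-1)N_0$. Consequently $z_N=\max\{i:t_i<N\}=\lceil N/N_0\rceil$ with probability one.

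Next I would feed this back into the theorem's bound $\lVert\mathcal{L}(X_N)-a_m(\cdot)\rVert\le (1-\epsilon)^{\lceil j/N_0\rceil}+\mathrm{Prob}(z_N<j)$, valid for every $j>0$. Taking $j=z_N=\lceil N/N_0\rceil$ makes the residual term $\mathrm{Prob}(z_N<j)=\mathrm{Prob}(z_N<z_N)$ vanish identically, leaving only the geometric term. As in the proof of the preceding theorem I would first carry this out in the reduced setting $N_0=1$, where $z_N=N$ and the bound reads $(1-\epsilon)^N$, and then re-inflate to general $N_0$ by applying the $N_0=1$ statement to the chain observed every $N_0$ steps; this last step is legitimate because $P_m$ has $a_m$ as stationary distribution, so $\lVert\mathcal{L}(X_k)-a_m\rVert$ is non-increasing in $k$ and the sub-sampled chain still coalesces at rate $\epsilon$ per block. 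Finally, since $Y_k$ is started from $a_m$ and hence $\mathcal{L}(Y_k)=a_m$ for all $k$, while $X_0$ may be taken to be any (in particular a point-mass) initial distribution, the quantity $\lVert P_m^N-a_m(\cdot)\rVert$ is exactly $\lVert\mathcal{L}(X_N)-a_m(\cdot)\rVert$, and the claim follows.

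An alternative, self-contained route I would be equally happy to write is to build the coupling directly for $R=\mathcal{H}$: at every $N_0$-block both chains are necessarily in $R$, hence coalesce with probability $\epsilon$ independently of the past, so the coupling time $T$ is stochastically dominated by $N_0$ times a $\mathrm{Geometric}(\epsilon)$ variable, giving $\mathrm{Prob}(T>N)\le(1-\epsilon)^{\lceil N/N_0\rceil}$, after which Theorem~\ref{thm2.2} (the coupling inequality) closes the argument. I do not expect a genuine obstacle here; the only thing demanding care is the bookkeeping with $N_0$ and the ceiling/floor conventions — in particular checking that the reduction ``without loss of generality $N_0=1$'' is compatible with the $\lceil N/N_0\rceil$ exponent in the statement, which is exactly where a reader will want the extra line of detail.
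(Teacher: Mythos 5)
Your proposal is correct and follows essentially the same route as the paper: the proposition is obtained by specialising the preceding Rosenthal-type theorem to the Doeblin case $R=\mathcal{H}$, observing that the visit times $t_i$ become deterministic, so $z_N=\lceil N/N_0\rceil$ almost surely and the residual term $Prob(z_N<j)$ vanishes, leaving only the geometric factor. If anything, you are more careful than the paper's one-line derivation about the $N_0$ bookkeeping (the reduction to $N_0=1$ and the ceiling/floor conventions), a point on which the paper's own statement is itself somewhat loose.
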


Recall the set up in the proof of Lemma \ref{lemma2.10}. 

\begin{proposition}(Rosenthal)\cite{couplingconstruct}
Assume the MTMC chain has fixed approximation distribution $a_m$. Assume the state space $\mathcal{H}$ is compact. Assume the approximation density $\Tilde{a}_m$ is continuous and finite. Further assume the proposal density $q$ is positive and continuous. Then $R=\mathcal{H}$ and
\begin{equation*}
    \rVert P_m^N - a_m(\cdot) \rVert \leq (1-\epsilon)^{\lceil \frac{N}{N_0}\rceil}.
\end{equation*}
\end{proposition}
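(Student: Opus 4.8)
The plan is to show that the stated hypotheses force the \emph{Doeblin condition}, i.e.\ the minorisation condition with $R=\mathcal{H}$ and $N_0=1$, and then to invoke the preceding proposition (minorisation on the entire state space) verbatim.

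First I would reuse the lower bound on $P_m(\mathbf{x},\cdot)$ extracted in the proof of Lemma~\ref{lemma2.10}. Writing $\tilde a_m$ for the (possibly unnormalised) approximation density and $q$ for the proposal density, and discarding the nonnegative rejection mass on the diagonal, one has for every $\mathbf{x}\in\mathcal{H}$ and every measurable $B$
\begin{equation*}
    P_m(\mathbf{x},B) \;\geq\; \int_B q(\mathbf{x},\mathbf{y})\min\Bigl(1,\;\frac{\tilde a_m(\mathbf{y})}{\tilde a_m(\mathbf{x})}\frac{q(\mathbf{y},\mathbf{x})}{q(\mathbf{x},\mathbf{y})}\Bigr)\,d\mathbf{y}
    \;=\; \int_B \min\Bigl(q(\mathbf{x},\mathbf{y}),\;\frac{\tilde a_m(\mathbf{y})}{\tilde a_m(\mathbf{x})}\,q(\mathbf{y},\mathbf{x})\Bigr)\,d\mathbf{y}.
\end{equation*}
Since $\mathcal{H}$ is already bounded there is no need for the truncation $B_r$ used in Lemma~\ref{lemma2.10}.

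Next I would exploit compactness to make the integrand uniformly positive. As $q$ is continuous and strictly positive on the compact set $\mathcal{H}\times\mathcal{H}$, it attains a minimum $q_{\min}>0$; as $\tilde a_m$ is continuous, finite and positive on compact $\mathcal{H}$, it attains a maximum $\tilde a_{\max}<\infty$ and a minimum $\tilde a_{\min}>0$. Hence the integrand is bounded below by the constant $c:=q_{\min}\min\bigl(1,\tilde a_{\min}/\tilde a_{\max}\bigr)>0$, uniformly in $\mathbf{x},\mathbf{y}$, so $P_m(\mathbf{x},B)\geq c\,\mathrm{Leb}(B)$ for all $\mathbf{x}\in\mathcal{H}$. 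Because a compact $\mathcal{H}$ has finite Lebesgue measure, setting $\gamma(B):=\mathrm{Leb}(B)/\mathrm{Leb}(\mathcal{H})$ (a genuine probability distribution) and $\epsilon:=c\,\mathrm{Leb}(\mathcal{H})$ — which automatically satisfies $\epsilon\leq 1$ by taking $B=\mathcal{H}$ — yields $P_m(\mathbf{x},B)\geq\epsilon\,\gamma(B)$ for every $\mathbf{x}\in\mathcal{H}$ and every $B$. This is precisely the minorisation condition with $N_0=1$ and $R=\mathcal{H}$, so $Prob(z_N<j)=0$ in the general theorem and the bound $(1-\epsilon)^{\lceil N/N_0\rceil}$ follows from the preceding proposition.

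I expect the only genuine obstacle to be the positivity and boundedness of $\tilde a_m$, specifically arguing $\tilde a_{\min}>0$: this requires that the approximation density not vanish on $\mathcal{H}$, which holds, for instance, for the Voronoi (nearest-neighbour) approximation as long as the interpolated values $p(\mathbf{x}_i\mid\mathbf{d})$ are positive. A minor bookkeeping point is that $P_m(\mathbf{x},\cdot)$ carries an atom at $\mathbf{x}$ from rejection and so is not purely absolutely continuous; but that atom only adds mass, leaving the displayed inequality intact, and a uniform lower bound on the absolutely continuous part already suffices for minorisation.
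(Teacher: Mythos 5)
Your proposal takes essentially the same route as the paper: drop the rejection atom, lower-bound the off-diagonal kernel by $\min\bigl(q(\mathbf{x},\mathbf{y}),\,\frac{\tilde a_m(\mathbf{y})}{\tilde a_m(\mathbf{x})}q(\mathbf{y},\mathbf{x})\bigr)$ (the bound recycled from Lemma~\ref{lemma2.10}), use compactness together with continuity and positivity of $q$ to extract uniform constants, conclude the Doeblin condition with $N_0=1$ and $R=\mathcal{H}$, and then quote the preceding minorisation proposition. The one substantive difference is the choice of minorising measure, and it costs you an extra hypothesis. You minorise by normalised Lebesgue measure, which forces the constant $c=q_{\min}\min(1,\tilde a_{\min}/\tilde a_{\max})$ and hence requires $\tilde a_{\min}>0$; as you yourself flag, the proposition only assumes $\tilde a_m$ continuous and finite, so under the stated hypotheses $\tilde a_{\min}$ may be $0$ and your bound degenerates. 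The paper sidesteps this by moving the $\mathbf{y}$-dependence into $\gamma$ instead of into the constant: with $c=\sup_{\mathcal{H}}\tilde a_m<\infty$ (continuity on a compact space) and $\epsilon_0=\inf_{\mathcal{H}\times\mathcal{H}}\min\bigl(q(\mathbf{x},\mathbf{y}),q(\mathbf{y},\mathbf{x})\bigr)>0$, one gets $P_m(\mathbf{x},d\mathbf{y})\geq \epsilon_0\min\bigl(1,\tilde a_m(\mathbf{y})/c\bigr)\,d\mathbf{y}$, a measure independent of $\mathbf{x}$ and nonzero because $\int_{\mathcal{H}}\tilde a_m>0$; normalising it gives the minorisation with no lower bound on $\tilde a_m$ at all. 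So your argument is sound under the additional positivity assumption you state, and the repair needed to match the proposition as written is exactly this change of minorising measure; the rest of your write-up (handling of the atom, the check that $\epsilon\leq 1$, the appeal to the Doeblin case of the previous result) coincides with the paper's proof.
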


\begin{proof}
Suppose $C$ is a compact set on which $\Tilde{a}_m$ is bounded by some constant $c$. Let $\mathbf{x} \in C$ and $\mathbf{y} \in D$ such that
\begin{equation*}
    \min (q(\mathbf{x},\mathbf{y}),q(\mathbf{y},\mathbf{x})) = \epsilon .
\end{equation*}
Then
\begin{equation*}
    P_m(\mathbf{x},d\mathbf{y}) \geq q(\mathbf{x},\mathbf{y}) \min \bigg( 1, \frac{\Tilde{a}_m(\mathbf{y})}{\Tilde{a}_m(\mathbf{x})} \frac{q(\mathbf{y},\mathbf{x})}{q(\mathbf{x},\mathbf{y})} \bigg)d\mathbf{y} \geq \epsilon \min \bigg(1,\frac{\Tilde{a}_m(\mathbf{y})}{c}\bigg)d\mathbf{y}
\end{equation*}
independent of $\mathbf{x}$. The claim follows.
\end{proof}

$Remark.$ If the state space is finite as in the last subsection, the Doeblin condition is trivially true.

In this subsection, we bound the rate of convergence by bounding $\epsilon$ in the first condition of Theorem \ref{thm2.6}. This is possible since the MTMC chain is Markovian when the acceptance ratio (\ref{alpha2}) is unchanged from iteration to iteration by fixing $a_m$ as approximation distribution. On the other hand, giving explicit values of $\delta(m)$ in the assumptions of Theorem \ref{thm2.6} is possible when knowing the approximation method explicitly, i.e. how the next $a_{m+1}$ is chosen based on the history $\mathcal{G}_{m} = \sigma (a_0,\dots,a_{m},\mathbf{x}^0, \dots, \mathbf{x}^{m})$.

\section{Discussion}
In this paper, we propose the MTMC algorithm and investigate its validity under certain constraints. For any approximation method satisfying the conditions in Theorem \ref{thm2.6}, the convergence is guaranteed and the rate of convergence can be estimated based on the general $\epsilon$ and the $\delta$ of the method chosen. It is natural to consider what kind of approximation method should be used for a 'fast' convergence. The simplest choices include different kinds of interpolations or regressions. As long as the two conditions in the Theorem \ref{thm2.6} are fulfilled, we shall not limit ourselves with one approximation method throughout. For instance, the scheme where one admits different methods of approximation in different parts of the state space can be considered. One could also consider an adaptive approximation method depending on certain variance distances by adding extra terms in the algorithm deciding which approximation method to be used in the next iteration. Currently, Our ongoing work is studying these and some related ideas based on numerical simulations.

However, the reader shall always keep the No-free-lunch Theorem in mind:

\textit{"A general-purpose universal optimization strategy is theoretically impossible, and the only way one strategy can outperform another is if it is specialized to the specific problem under consideration".}\cite{NFL}

No approximation method performs better than the others when chosen without prior knowledge. Thus, it remains crucial to know what kind of problem one is dealing with when choosing the approximation method (or the adaptive method which chooses the approximation method iteratively). 
\bibliographystyle{unsrt}
\bibliography{MTMC.bib}
\end{document}